\tikzset{every state/.style={inner sep=2pt},elliptic state/.style={draw,ellipse}}
\newcommand{\qs}[2]{#1_#2} 
\renewcommand{\downarrow}{2}
\renewcommand{\uparrow}{1}
\theoremstyle{definition}
\newtheorem{thm}{Theorem}[section]
\newtheorem{corollary}[thm]{Corollary}
\newtheorem{dfn}[thm]{Definition}
\newtheorem{lem}[thm]{Lemma}
\crefname{enumi}{Case}{Cases}
\crefname{thm}{Theorem}{Theorems}
\crefname{corollary}{Corollary}{Corollaries}
\crefname{lem}{Lemma}{Lemmas}
\crefname{dfn}{Definition}{Definitions}
\crefname{cnj}{Conjecture}{Conjectures}
\title{Distribution of Behaviour\\ into \\ Parallel Communicating Subsystems}
\author{Omar al Duhaiby \qquad\qquad Jan Friso Groote
\institute{Eindhoven University of Technology\\ Eindhoven, The Netherlands}
\email{o.z.alzuhaibi@tue.nl \quad\qquad\ j.f.groote@tue.nl}
}
\date{May 2019}
\begin{document}
\maketitle

\begin{abstract}
The process of decomposing a complex system into simpler subsystems has been of interest to computer scientists over many decades, for instance, for the field of distributed computing. In this paper, motivated by the desire to distribute the process of active automata learning onto multiple subsystems, we study the equivalence between a system and the total behaviour of its decomposition which comprises subsystems with communication between them. We show synchronously- and asynchronously-communicating decompositions that maintain branching bisimilarity, and we prove that there is no decomposition operator that maintains divergence-preserving branching bisimilarity over all LTSs. 
\end{abstract}

\section{Introduction}

The process of decomposing a complex system into simpler subsystems is the cornerstone of behavioural analysis regardless of where it is applied, to the atom or to the human psyche. Studying the relationship between a complex system and the total behaviour of its decomposition is the subject matter of this paper. However, instead of atoms or human brains, in the field of formal methods, we simply dissect automata. This paper studies how the behaviour of a Labelled Transition System (LTS) can be distributed into a parallel (de)composition of communicating subsystems while maintaining behavioural equivalence.

\paragraph{Motivation}
This work was motivated by a case study in the industry~\cite{duhaiby2018Pitfalls} based on which we pursued the possibility of deducing the internal components of a system based on the model inferred by the active model learning technique~\cite{angluin_learning_1987}. If it were possible at all, then the learned system must be equivalent to the parallel decomposition deduced. 

Our goal is to examine the possibility of such deduction and we do that by devising a decompositioning scheme with certain assumptions, then examine its equivalence with the original system.

\paragraph{Related Work} 
Previous work done on the topic of decomposition focuses on uniqueness properties and on producing simpler components. For example the primary decomposition theorem by Krohn and Rhodes states that any automaton can be decomposed into a cascaded product of simpler automata such that the automaton is homomorphic to its decomposition~\cite{krohn1965algebraic}. And in 1998, Milner and Moller introduced a semantics of parallel decompositions comprising non-communicating subsystems~\cite{Milner1993Unique}, and they proved that any finite system of behaviour can be decomposed into a unique set of prime parallel non-communicating subsystems. While Milner and Moller's theorem was in strong bisimulation set in the Calculus of Communicating Systems (CCS), Luttik~\cite{luttik2015Unique} later introduced a more general framework of commutative monoids that can be used to establish unique decomposition in weak and branching bisimulation semantics. 

In contrast to those works, we consider decomposing a system based on partitioning its alphabet into disjoint sets and we require that the subsystems communicate. In that sense, this work is more similar to Brinksma et al.~\cite{brinksma1995} and Hultstr{\"{o}}m~\cite{hultstrom1995structural} who made a decomposition based on a given partition of actions. Another similarity is the need for synchronisation between subsystems as defined in~\cref{sec:decomp-sync-def}. 

\paragraph{Contribution} We define two decompositions of parallel communicating subsystems, one synchronous and the other asynchronous, and we prove that both decompositions maintain branching bisimilarity~\cite{vanGlabbeek1996Branching} with the source automaton. We also prove that there is no way of decomposing an automaton (under certain conditions) such that it is divergent-preserving branching-bisimilar~\cite{vanGlabbek2008BranchingExplicitDivergence} to the resulting decomposition.
We consider branching bisimilarity, but the results of this paper easily carry over to other equivalences such as weak bisimilarity.

\paragraph{Outline} The outline of this paper is as follows. \cref{sec:preliminaries} introduces the preliminaries. \cref{sec:decomp-general} defines and discussed the general decomposition operator on which we base our arguments. \cref{sec:decomps-both} defines two decompositions of communicating subsystems, one for synchronous communication and the other for asynchronous communication, and proves that each maintains a branching bisimulation relation with the source automaton. \cref{sec:proofdpbb} contains the proof that there is no way of decomposing an automaton, through our general decomposition operator, such that it maintains divergence preserving branching bisimulation with its decomposition. Finally, \cref{sec:conclusion} discusses the results and interprets them in light of our initial motivation.

\paragraph{Acknowledgement}
We wish to thank Rick Erkens, Bas Luttik, Thomas Neele, Joshua Moerman, Pieter Cuijpers, Bharat Garhewal, Hans van Wezep, Arjan Mooij and the referees of the EXPRESS/SOS 2019 workshop for their thorough feedback, for sharing their knowledge, and for their motivation and support.

\section{Preliminaries}
\label{sec:preliminaries}

In this section, we present the preliminaries of labelled transition systems, the synchronous product and bisimulation relations, aided by~\cite{groote2004book}. We start with the definition of a labelled transition system (LTS).

\begin{dfn} [LTS]
We define our LTS as a four-tuple $(S,\Sigma, \rightarrow, s_0)$ where:
\begin{itemize}
\item $S$ is a non-empty finite set of states.
\item $\Sigma$ is the alphabet, also referred to as the action set.
\item ${\rightarrow} \subseteq S \times \Sigma \times S$ is a transition relation.
\item $s_0$ is the initial state.
\end{itemize}
\end{dfn}

We use the notation $x \xrightarrow{a} y$ to express a transition with action $a$ from state $x$ to state $y$. This and variations of it are formally defined as follows.

\begin{dfn} [Transition Relation]
Let $(S,\Sigma, \rightarrow, s_0)$ be an LTS with $s, s' \in S$ and $a \in \Sigma \cup \{\tau\}$, where $\tau$ is the internal/unobservable action. We use the following notations:

\begin{tabular}{ll}
$s \xrightarrow{a} s'$ & iff $\langle s, a, s' \rangle \in {\rightarrow}$. \\
$s \xrightarrow{a} $ & iff there is an $s'$ such that $s \xrightarrow{a} s'$. \\
$s \not\xrightarrow{a} $ & iff there is no $s'$ such that $s \xrightarrow{a} s'$. \\
$s \xrightarrow{a}^{*} s_n $ & iff there are $s_1, s_2, \ldots, s_n \in S$ such that $s \xrightarrow{a} s_1 \xrightarrow{a} s_2 \xrightarrow{a} \cdots \xrightarrow{a} s_n$. \\
$s \xrightarrow{a}^{\omega}$ & iff there are $s_1, s_2, \ldots \in S$ such that $s \xrightarrow{a} s_1$ and for all $i \in \mathbb{N}$, $s_i \xrightarrow{a} s_{i+1}$.
\end{tabular}
\label{dfn:transitions}
\end{dfn}

Next, we define complementary actions, i.e., actions on which communicating systems synchronise. Then we define the synchronous product of two automata, and show what role complementary actions play in computing it. 

\begin{dfn}[Co-actions]
For an arbitrary action $a$ that is not $\tau$, the action $\overline{a}$ (read as \emph{a bar}) is called its co-action. Also, $\overline{(\overline{a})} = a$. We say that actions $a$ and $\overline{a}$ are \emph{complementary} to each other and we call them a pair of \emph{complementary actions}.

We lift this operator to sets of actions such that $\overline{\Sigma} = \{\overline{a} \mid a \in \Sigma\}$.
\end{dfn}

\begin{dfn} [Synchronous Product]
The synchronous product of two LTSs \\
$(S_1,\Sigma_1, \rightarrow_1, q_0) \times$ 
$(S_2,\Sigma_2, \rightarrow_2, r_0)$ is the tuple $(S_1 \times S_2,\Sigma_x, \rightarrow_x,$ $(q_0, r_0))$ \\
where $\Sigma_x = (\Sigma_1 \cup \Sigma_2) \setminus \{a,\overline{a} \mid a \in \Sigma_1 \land \overline{a} \in \Sigma_2 \}$.\\
The transition relation ${\rightarrow_x} \subseteq (S_1 \times S_2) \times \Sigma_x \times (S_1 \times S_2)$ is defined as follows:
\[ 
\begin{cases} 
      (s,t) \xrightarrow{a} (s',t) & \text{iff } a \in \Sigma_1 \land \overline{a} \not\in \Sigma_2 \land s \xrightarrow{a}_1 s', \\
      (s,t) \xrightarrow{a} (s,t') & \text{iff } a \in \Sigma_2 \land  \overline{a} \not\in \Sigma_1  \land t \xrightarrow{a}_2 t', \text{ and} \\
      (s,t) \xrightarrow{\tau} (s',t') & \text{iff } a \in \Sigma_1 \land \overline{a} \in \Sigma_2 \land s \xrightarrow{a}_1 s' \land t \xrightarrow{\overline{a}}_2 t',
   \end{cases}
\]
where $\tau$ is the unobservable action.
\label{dfn:syncproduct}
\end{dfn}

Next, we define two notions of behavioural equivalence.

\begin{dfn}[Branching bisimulation]
Given an LTS $(S, \Sigma, \rightarrow, s_0)$ and a relation $\mathcal{R} \subseteq S \times S$. We call $\mathcal{R}$ a branching bisimulation relation iff for all states $s,t \in S$ such that $\langle s,t \rangle \in \mathcal{R}$, it holds that:
\begin{enumerate}
\item if $s \xrightarrow{a} s'$, then:
\begin{itemize}
\item $a=\tau$ and $\langle s', t \rangle \in \mathcal{R}$; or
\item $t \xrightarrow{\tau}^{*} t' \xrightarrow{a} t''$, $\langle s, t' \rangle \in \mathcal{R}$ and $\langle s', t'' \rangle \in \mathcal{R}$.
\end{itemize}
\item Symmetrically, if $t \xrightarrow{a} t'$, then:
\begin{itemize}
\item $a=\tau$ and $\langle s,t' \rangle \in \mathcal{R}$; or
\item $s \xrightarrow{\tau}^{*} s' \xrightarrow{a} s''$, $\langle s',t \rangle \in \mathcal{R}$ and $\langle s'',t' \rangle \in \mathcal{R}$.
\end{itemize}
\end{enumerate}

Two states $s$ and $t$ are branching \textit{bisimilar}, denoted $s \leftrightarroweq_{b} t$ iff there is a branching bisimulation relation $\mathcal{R}$ such that $\langle s,t \rangle \in \mathcal{R}$. Two LTSs $P$ and $Q$ are branching bisimilar, denoted $P \leftrightarroweq_{b} Q$, iff their initial states are.

\label{dfn:branchingbisim}
\end{dfn}

A state $s$ with $s \xrightarrow{\tau}^{\omega}$ is called \emph{divergent}. Hence, a state with a $\tau$ loop is also called divergent. 
Branching bisimulation does not preserve divergence, i.e., a divergent state can be branching bisimilar to a non-divergent one. Therefore, a stronger equivalence relation, namely divergence-preserving branching bisimulation, is defined below.

\begin{dfn}[Divergence-preserving branching bisimulation]
Given an LTS $(S, \Sigma, \rightarrow, s_0)$ and a relation $\mathcal{R} \subseteq S \times S$. We call $\mathcal{R}$ a divergence-preserving branching bisimulation relation iff it is a branching bisimulation relation and for all states $s,t \in S$ with $\langle s,t \rangle \in \mathcal{R}$, there is an infinite sequence $s \xrightarrow{\tau} s_1 \xrightarrow{\tau} s_2 \xrightarrow{\tau}$ with $\langle s_i, t \rangle \in \mathcal{R}$ for all $i > 0$ iff there is an infinite sequence $t \xrightarrow{\tau} t_1 \xrightarrow{\tau} t_2 \xrightarrow{\tau}$ and $\langle s, t_i \rangle \in \mathcal{R}$ for all $i > 0$.

Two states $s$ and $t$ are divergence-preserving branching bisimilar, denoted $s \leftrightarroweq_{db} t$ iff there is a divergence-preserving branching bisimulation relation $\mathcal{R}$ such that $\langle s,t \rangle \in \mathcal{R}$. Two LTSs $P$ and $Q$ are divergence-preserving branching bisimilar, denoted $P \leftrightarroweq_{db} Q$, iff their initial states are.
\label{dfn:dpbb}
\end{dfn}

\section{The Decomposition Operation}
\label{sec:decomp-general}
In this section, we establish a decompositioning scheme that is based on action partitioning in order to refer to it as \textit{the} general decomposition operation on which our theorems apply.
Our general decomposition operation is a function transforming a single LTS, given two disjoint actions sets, into two LTSs. It is formally defined as follows.

\begin{dfn} [General Decomposition Operation]
Given an LTS $M$ with alphabet $\Sigma$ and given two alphabets $\Sigma_1, \Sigma_2$ such that $\Sigma = \Sigma_1 \cup \Sigma_2$ and $\Sigma_1~\cap~\Sigma_2 = \emptyset$, we call $G$ a \emph{general decomposition operator} iff $G(M, \Sigma_1, \Sigma_2) = (M_1, M_2)$ such that $M_1$ has alphabet $\Sigma_{M_1}$ with ${\Sigma_1 \subseteq \Sigma_{M_1}}$ and ${\Sigma_{M_1} \cap \Sigma_2 = \emptyset}$, and likewise, $M_2$ has alphabet $\Sigma_{M_2}$ with $\Sigma_2 \subseteq \Sigma_{M_2}$ and $\Sigma_{M_2} \cap \Sigma_1 = \emptyset$.
\label{dfn:decomposition_general}
\end{dfn}

We refer to a method of decomposing automata as a \emph{decomposition operation} whereas the result of such transformation is called a \emph{decomposition}. A decomposition comprises two or more automata. This transformation is depicted in \cref{fig:decompExample}. Throughout the paper, we compare LTSs to the synchronous product of the decomposition, and if a a certain bisimulation relation holds between these two, then we say that the operation \emph{maintains} that relation.

The idea behind this transformation is, given a partition of actions of a system, to generate two subsystems, each using exclusively one of the two parts. 

\paragraph{Recursive decomposition.}
Note that \cref{dfn:decomposition_general} can easily be applied recursively allowing to decompose behaviour into multiple components.
As the alphabets over which an automaton can be decomposed can be empty, the number of components over which behaviour can be split can even be arbitrarily large.

\section{Branching Bisimilar Decompositions}
\label{sec:decomps-both}

In this section, we define two decomposition operations that are designed to maintain branching bisimilarity, and we actually prove that they do. The first one ($decomp_s$) decomposes into synchronously communicating subsystems while the second ($decomp_a$) decomposes into asynchronously communicating ones. In both of these operations, we build two automata that pass control between one another using $c$ messages. Only if an automaton seizes control can it perform one of its actions, otherwise it has to wait for the other to hand control over. Formal definitions and more detail follow.

\subsection{Decomposing into Synchronous Subsystems}
\label{sec:decomp-sync-def}

We define the decomposition of synchronous subsystems, summarised in \cref{fig:decompRules} in two patterns; the top dictates the decomposition of every state in the source LTS while the bottom dictates the decomposition of every transition. An omitted third pattern is symmetric to the second such that the transition's label simply belongs to the second subsystem rather than the first.

\begin{figure}
    \centering
\begin{tikzpicture}[->,shorten >=1pt,auto,on grid,semithick, inner sep=0.3mm,node distance=2cm, minimum size=0pt,bend angle=20,  transform shape]
\node [state,inner sep=1mm, minimum size=0pt] (r) {$r$};
\node [state,inner sep=0.3mm, minimum size=0pt,draw=none] (decomp) [right= of r] {$\xrightarrow{decomp_s}$};
\node[state,inner sep=1mm, minimum size=0pt]         (s') [below right= 2cm and 0.75cm of r] {$s$};
\node[state,inner sep=1mm, minimum size=0pt]         (s) [below left= 2cm and 0.75cm of r] {$r$};
\node [state,inner sep=0.3mm, minimum size=0pt,draw=none] (decomp2) [right= of s'] {$\xrightarrow{decomp_s}_{a \in \Sigma_1}$};
\path [every node/.style={font=\footnotesize}, bend angle=37]
(s) edge              node {$a$} (s')
;




\node[state,inner sep=0.6mm, minimum size=0pt] (r1)                   [right =of decomp] {$r_\uparrow$};
\node[state,inner sep=0.6mm, minimum size=0pt]         (r2) [right=of r1] {$r_\downarrow$};
\node[state,inner sep=0.6mm, minimum size=0pt]         (s1) [right =of decomp2] {$r_\uparrow$};
\node[state, elliptic state, draw=none,minimum size=0pt]         (tr1) [right= 1.5cm of s1] {$t_{a,s}$};
\node[state,inner sep=0.6mm, minimum size=0pt]         (s2) [right= 1.5cm of tr1] {$s_\uparrow$};

\path [every node/.style={font=\footnotesize}, bend angle=17]
(r1) edge  [bend left] node {$c_{r_\uparrow,r_\downarrow}$} (r2)
(r2) edge  [bend left] node {$\overline{c_{r_\downarrow,r_\uparrow}}$} (r1)
(s1) edge   node {$a$} (tr1)
(tr1) edge node {$t_{s}$} (s2)
;

\node[] (par) [right= 1cm of r2] {\Huge{$\mid$}};
\node[] (par2) [right= 1cm of s2] {\Huge{$\mid$}};
\node[state,inner sep=0.6mm, minimum size=0pt] (r1_2)                   [right =of r2] {$r_\uparrow$};
\node[state,inner sep=0.6mm, minimum size=0pt]         (r2_2) [right=of r1_2] {$r_\downarrow$};
\node[state,inner sep=0.6mm, minimum size=0pt]         (s1_2) [right =of s2] {$r_\uparrow$};
\node[state,inner sep=0.6mm, minimum size=0pt]         (s2_2) [right=of s1_2] {$s_\uparrow$};

\path [every node/.style={font=\footnotesize}, bend angle=17]
(r1_2) edge  [bend left] node {$\overline{c_{r_\uparrow,r_\downarrow}}$} (r2_2)
(r2_2) edge  [bend left] node {$c_{r_\downarrow,r_\uparrow}$} (r1_2)
(s1_2) edge  node {$\overline{t_s}$} (s2_2)
;

\end{tikzpicture}
    \caption{The two patterns that delineate the operator $decomp_s$ (\cref{dfn:decomposition}).}
    \label{fig:decompRules}
\end{figure}
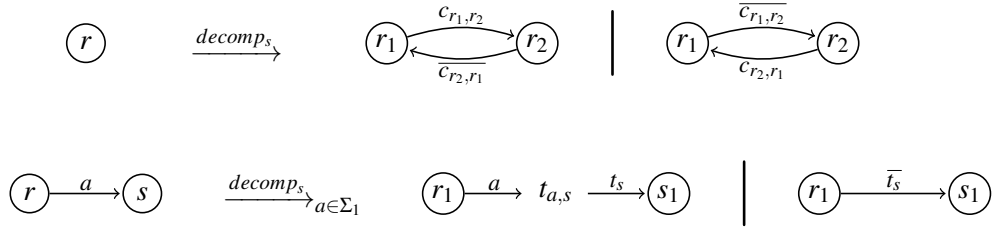

\begin{dfn} [Synchronous Decomposition Operation]
Given an LTS $M = (S, \Sigma, \rightarrow, q)$ and two alphabets $\Sigma_1, \Sigma_2$ such that $\Sigma = \Sigma_1 \cup \Sigma_2$ and $\Sigma_1 \cap \Sigma_2 = \emptyset$, then we can decompose $M$ over $\Sigma_1$ and $\Sigma_2$ by applying the following operation:

$decomp_s (M, \Sigma_1, \Sigma_2) = (M_1, M_2)$ where:
\begin{enumerate}
\item $M_1 = (S_C \cup S_{T_1}, \Sigma_1 \cup \Sigma_{S_1}, \rightarrow_1, (q,1))$\\
$M_2 = (S_C \cup S_{T_2}, \Sigma_2 \cup \Sigma_{S_2}, \rightarrow_2, (q,1))$. 
\item For every state $s$ in $S$, we introduce two states $(s,1), (s,2) \in S_C$: \begin{equation}
  \begin{aligned}
&S_{C_1} = \{(s,\uparrow) \mid s \in S\} 
\hspace{1cm} S_{C_2} = \{(s,\downarrow) \mid s \in S\}
\hspace{1cm} S_C = S_{C_1} \cup S_{C_2} &\\
  \end{aligned}
\end{equation}
\paragraph{Notation.} Tuple-states of the form $(s,i)$ such as $(s,1)$ and $(s,2)$ are shortened to $s_i$. Therefore, it is to be held throughout the paper that $s_i$ is derived from $s$ rather than it being a completely unrelated symbol to $s$.
\item The set of $c$-actions is defined as follows:
\begin{equation}
  \begin{aligned}
&\Sigma_{C} = \{c_{s_\uparrow,s_\downarrow}, \overline{c_{s_\downarrow,s_\uparrow}} \mid s_\uparrow \in S_{C_1}, s_\downarrow \in S_{C_2}\}\\
  \end{aligned}
\end{equation}
\item The sets of $t$ actions and $t$ states are defined as follows:
\begin{equation}
  \begin{aligned}
&\Sigma_{T_1} = \{t_{s_\uparrow} \mid s_\uparrow \in S_C \} 
&\Sigma_{T_2} = \{t_{s_\downarrow} \mid s_\downarrow \in S_C \}\\
&S_{T_1} = \{t_{a,s_\uparrow} \mid a \in \Sigma_1, s_\uparrow \in S_{C_1}\} 
&S_{T_2} = \{t_{a,s_\downarrow} \mid a \in \Sigma_2, s_\downarrow \in S_{C_2}\} \\
  \end{aligned}
\end{equation}
\item The complete sets of actions of $M_1$ and $M_2$ are respectively defined as: 
\begin{equation}
  \begin{aligned}
&\Sigma_{S_1} = \Sigma_{T_1} \cup \Sigma_{C} \cup \overline{\Sigma_{T_2}} \hspace*{25pt}
\Sigma_{S_2} = \Sigma_{T_2} \cup \overline{\Sigma_{C}} \cup \overline{\Sigma_{T_1}}\\
  \end{aligned}
\end{equation}

\item The transition relations
${\rightarrow_i} \subseteq (S_C \cup S_{T_i}) \times (\Sigma_i \cup \Sigma_{S_i}) \times (S_C \cup S_{T_i})$ are defined as follows. For $i,j \in \{\uparrow,\downarrow\}$ and $i \neq j$, $\rightarrow_i$ is the minimal relation satisfying the following:

\begin{enumerate}
\item For all $s \in S$ and for all $c_{\qs{s}{i},\qs{s}{j}} \in \Sigma_C$:
\begin{equation}
	\qs{s}{i} \xrightarrow{c_{\qs{s}{i},\qs{s}{j}}}_i \qs{s}{j} \\
    \qs{s}{i} \xrightarrow{\overline{c_{\qs{s}{i},\qs{s}{j}}}}_j \qs{s}{j} \\
    \label{eq2}
\end{equation}

\item For all $s,s' \in S$, and all $a \in \Sigma_i$, if $s \xrightarrow{a} s'$, then: 
\begin{equation}
  \begin{aligned}
     \qs{s}{i} \xrightarrow{a}_i t_{a,\qs{s'}{i}} \xrightarrow{t_{\qs{s'}{i}}}_i \qs{s'}{i}\\
     \qs{s}{i} \xrightarrow{\overline{t_{\qs{s'}{i}}}}_j \qs{s'}{i}\\
     \label{eq1}
  \end{aligned}
\end{equation}

\end{enumerate}
\end{enumerate}

\label{dfn:decomposition}
\end{dfn}

Two classes of actions are introduced, $c$-actions and $t$-actions. The $c$-actions come in pairs, and they resemble passing a control token between $M_1$ and $M_2$. For instance, looking at \cref{fig:decompExample}, when, at some state $r \in S$ for which a pair of states $r_\uparrow, r_\downarrow \in S_C$ exists in both $M_1$ and $M_2$, and control is to be passed from $M_1$ to $M_2$, then a pair of complementary $c$ actions synchronises, namely, actions $c_{r_\uparrow,r_\downarrow}$ and $\overline{c_{r_\uparrow,r_\downarrow}}$, to produce a synchronous transition in both machines from $r_\uparrow$ to $r_\downarrow$. Likewise, actions $c_{r_\downarrow,r_\uparrow}$ and $\overline{c_{r_\downarrow,r_\uparrow}}$ synchronise to pass control in the opposite direction from $M_2$ to $M_1$. 

The $t$-actions are introduced to synchronise transitions occurring in one machine with the other. In addition, they require the introduction of $t$-states. Observe \cref{fig:decompExample} where an $a_1$ transition occurs in $M_1$. The aim is the transition $r_\uparrow \xrightarrow{a_1} s_\uparrow$, but in order to synchronise this with $M_2$, we introduce a middle state $t_{a_1,s_\uparrow} \in S_{T_1}$ from which the only possible transition is $t_{a_1,s_\uparrow} \xrightarrow{t_{s_\uparrow}} s_\uparrow$ which synchronises with the transition $r_\uparrow \xrightarrow{\overline{t_{s_\uparrow}}} s_\uparrow$ in $M_2$. 


The operation ($decomp_s$) can be summarised by two patterns shown in \cref{fig:decompRules}; the top pattern applies to each state and the bottom one applies to each transition.

\paragraph{Is the Decomposition a Simplification?}
The decomposition is obviously larger than the original system. That is due to the nature of an alphabet-partitioning-based decomposition where subsystems must hand control over between one another. Thus, every subsystem must have, for each state in the original, multiple ones expressing where control lies. In special cases, a smaller component may suffice; for example, in \cref{fig:decompExample}, because state $r$ enables no $b$ actions, a state $r_2$ is not needed and all transitions going to $r_2$ can instead go to $r_1$. However, we believe that our definition the way it is is more understandable because of its generality and symmetry. The element of simplification lies not in reducing the size of a system but rather in partitioning its alphabet.


\paragraph{Computing the Synchronous Product.}
For a decomposition $(M_1, M_2)$ by \cref{dfn:decomposition}, the synchronous product $M_x = M_1 \times M_2$ is the LTS $(S_x,\Sigma_1 \cup \Sigma_2, \rightarrow_x, (q_1, q_1))$, where:
\begin{equation}
\begin{aligned}
S_x = S_1 \times S_2 &= (S_C \cup S_{T_1}) \times (S_C \cup S_{T_2}) \\
&= (S_C \times S_C) \cup (S_{T_1} \times S_{T_2}) \\
& \quad \, \cup (S_{T_1} \times S_C) \cup (S_C \times S_{T_2}) \\
\end{aligned}
\end{equation}
with $\Sigma_{S_1}, \Sigma_{S_2}, S_{T_1}, S_{T_2}$ being sets introduced by $decomp_s$. The transition relation $\rightarrow_x$ is defined as follows for $i,j \in \{\uparrow,\downarrow\}$ and $i \neq j$:
\begin{enumerate}
\item if $s \xrightarrow{a} s'$ and $a \in \Sigma_i$ then by~(\ref{eq1}) there is a state $t_{a,\qs{s'}{i}} \in S_{T_i}$ and a pair of complementary actions $t_{\qs{s'}{i}}, \overline{t_{\qs{s'}{i}}} \in \Sigma_{S_i}$ such that: 
\begin{equation}
  \begin{aligned}
    (\qs{s}{i},\qs{s}{i}) \xrightarrow{a}_x s_a \xrightarrow{\tau}_x (\qs{s'}{i},\qs{s'}{i}),
    \label{eq3}
  \end{aligned}
\end{equation}
\[
\text{where } s_a = 
\begin{cases}
      (t_{a,\qs{s'}{i}}, \qs{s}{i}) &\text{if } i=\uparrow,\\
      (\qs{s}{i}, t_{a,\qs{s'}{i}}) &\text{if } i=\downarrow.
\end{cases}
\]
\item For all $s \in S$, there exist $c_{\qs{s}{i},\qs{s}{j}},c_{\qs{s}{j},\qs{s}{i}} \in \Sigma_C$ such that, by~(\ref{eq2}), $\qs{s}{i} \xrightarrow{c_{\qs{s}{i},\qs{s}{j}}}_i \qs{s}{j}$ and $\qs{s}{i} \xrightarrow{\overline{c_{\qs{s}{i},\qs{s}{j}}}}_j \qs{s}{j}$, and thus:
\begin{equation}
  \begin{aligned}
	&(\qs{s}{i},\qs{s}{i}) \xrightarrow{\tau}_x (\qs{s}{j},\qs{s}{j})\\
    \label{eq5} \\
  \end{aligned}
\end{equation}
\end{enumerate}

\label{productofdecomposition}

\subsection{Proof that the synchronous decomposition operation maintains branching bisimulation}
\label{sec:decomp-sync-proof}

In this subsection, we show an application of $decomp_s$ (\cref{dfn:decomposition}) to a sample LTS, we demonstrate that $decomp_s$ maintains branching bisimilarity, and then we prove that branching bisimilarity is maintained through any and all applications of $decomp_s$.

\cref{fig:decompExample} shows the LTS at the left side and its decomposition at the right side. The two patterns shown in \cref{fig:decompRules} can be applied directly to this LTS. The top pattern applies twice, once per state, and the bottom pattern applies three times, once per transition.

\begin{figure}
    \centering
\begin{tikzpicture}[->,shorten >=1pt,auto,node distance=2cm,on grid,semithick, inner sep=0.3mm, bend angle=20, transform shape
]
\node [label={[xshift=0.5cm,yshift=-4pt]$\underset{\{a_1,a_2\},\{b_1\}}{\xrightarrow{decomp_s}}$}] (decomp) { };
\node[initial,initial where=above,initial text=,state,inner sep=1mm, minimum size=0pt] (r) [above left= of decomp]                    {$r$};
\node[state,inner sep=1mm, minimum size=0pt]         (s) [below left=of decomp] {$s$};
\path [every node/.style={font=\footnotesize}, bend angle=37]
(r) edge              node [anchor=east] {$a_1$} (s)
(s) edge [bend left]  node {$a_2$} (r)
    edge [bend right] node [anchor=west] {$b_1$} (r)
;
\node[state, elliptic state, inner sep=0.3mm, minimum size=0pt]         (tr1) [right=of decomp] {$t_{a_2,r_\uparrow}$};
\node[state, elliptic state, inner sep=0.3mm, minimum size=0pt]         (ts1) [right= 1.4cm of tr1] {$t_{a_1,s_\uparrow}$};
\node[initial,initial where=above,initial text=,state,inner sep=0.3mm, minimum size=0pt] (r1)                   [above =of ts1] {$r_\uparrow$};
\node[state,inner sep=0.3mm, minimum size=0pt]         (r2) [right=of r1] {$r_\downarrow$};
\node[state,inner sep=0.3mm, minimum size=0pt]         (s1) [below =of ts1] {$s_\uparrow$};
\node[state,inner sep=0.3mm, minimum size=0pt]         (s2) [right=of s1] {$s_\downarrow$};

\path [every node/.style={font=\footnotesize}, bend angle=17]
(r1) edge              node {$a_1$} (ts1)
     edge  [bend left] node {$c_{r_\uparrow,r_\downarrow}$} (r2)
(r2) edge  [bend left] node {$\overline{c_{r_\downarrow,r_\uparrow}}$} (r1)
(ts1) edge node {$t_{s_\uparrow}$} (s1)
(s1) edge [bend left=30]  node {$a_2$} (tr1)
     edge [bend left] node {$c_{s_\uparrow,s_\downarrow}$} (s2)
(tr1) edge [bend left=30] node {$t_{r_\uparrow}$} (r1)
(s2) edge node [anchor=west] {$\overline{t_{r_\downarrow}}$} (r2)
     edge [bend left] node {$\overline{c_{s_\downarrow,s_\uparrow}}$} (s1)
;

\node[] (par) [right= 18pt of $(r2)!0.5!(s2)$] {\Huge{$\mid$}};
\node[initial,initial where=above,initial text=,state,inner sep=0.3mm, minimum size=0pt] (r1_2)                   [right =of r2] {$r_\uparrow$};
\node[state,inner sep=0.3mm, minimum size=0pt]         (r2_2) [right=of r1_2] {$r_\downarrow$};
\node[state,inner sep=0.3mm, minimum size=0pt]         (s1_2) [right =of s2] {$s_\uparrow$};
\node[state,inner sep=0.3mm, minimum size=0pt]         (s2_2) [right=of s1_2] {$s_\downarrow$};
\node[state, elliptic state,inner sep=0.3mm, minimum size=0pt]         (tr2) at ($(r2_2)!0.5!(s2_2)$) {$t_{b_1,r_\downarrow}$};

\path [every node/.style={font=\footnotesize}, bend angle=17]
(r1_2) edge              node {$\overline{t_{s_\uparrow}}$} (s1_2)
     edge  [bend left] node {$\overline{c_{r_\uparrow,r_\downarrow}}$} (r2_2)
(r2_2) edge  [bend left] node {$c_{r_\downarrow,r_\uparrow}$} (r1_2)
(tr2) edge [anchor=west] node {$t_{r_\downarrow}$} (r2_2)
(s1_2) edge [bend left]  node {$\overline{t_{r_\uparrow}}$} (r1_2)
     edge [bend left] node {$\overline{c_{s_\uparrow,s_\downarrow}}$} (s2_2)
(s2_2) edge node [anchor=west] {$b_1$} (tr2)
     edge [bend left] node {$c_{s_\downarrow,s_\uparrow}$} (s1_2)
;

\end{tikzpicture}
    \caption{Example of synchronous decomposition operation of \cref{dfn:decomposition}.}
    \label{fig:decompExample}
\end{figure}
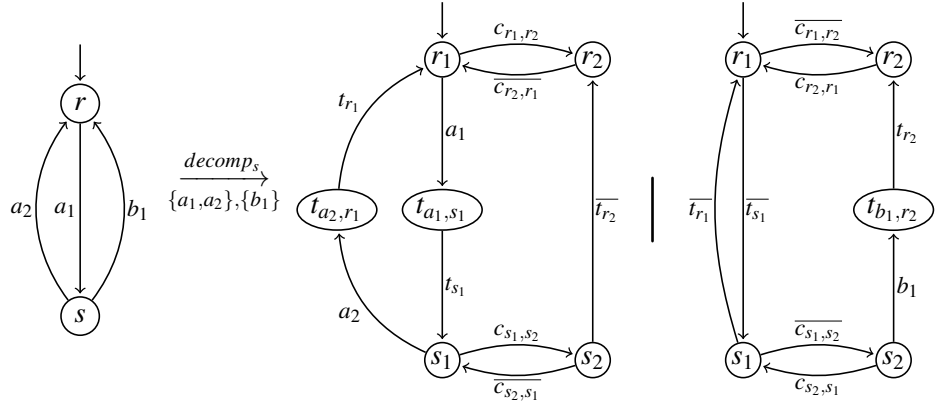

Next, we compute the synchronous product and form one LTS shown at the right of \cref{fig:bbexample}. The nodes are divided into two equivalence classes, top and bottom. The states in the top class are branching bisimilar to state $r$ whereas the states in the bottom one are branching bisimilar to state $s$.

The following proves the branching bisimilarity and thus proves that there is a way of decomposing an LTS such that branching bisimilarity is maintained.

\begin{thm}
\label{lem:bbproof}
Given an LTS $M = (S, \Sigma, \rightarrow, s_0)$ and two alphabets $\Sigma_1, \Sigma_2$ such that $\Sigma = \Sigma_1 \cup \Sigma_2$ and $\Sigma_1 \cap \Sigma_2 = \emptyset$, and given an LTS $M_x = M_1 \times M_2$ where $(M_1, M_2) = decomp_s(M)$ by \cref{dfn:decomposition}, then $M \leftrightarroweq_b M_x$.
\end{thm}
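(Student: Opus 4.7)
The plan is to exhibit an explicit branching bisimulation relation $\mathcal{R}$ between the states of $M$ and the states of $M_x$, and verify the two clauses of \cref{dfn:branchingbisim} directly. Since the initial state of $M_x$ is $(q_\uparrow,q_\uparrow)$ (where $q=s_0$), it will suffice to make sure $\langle s_0,(q_\uparrow,q_\uparrow)\rangle\in\mathcal{R}$.

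I will define $\mathcal{R}$ as the union of four families of pairs, one for each kind of reachable state of $M_x$:
\begin{equation*}
\begin{aligned}
\mathcal{R} \;=\;&\{\langle s,(s_\uparrow,s_\uparrow)\rangle : s\in S\}\;\cup\;\{\langle s,(s_\downarrow,s_\downarrow)\rangle : s\in S\}\\
\cup\;&\{\langle s',(t_{a,s'_\uparrow},s_\uparrow)\rangle : s\xrightarrow{a}s',\,a\in\Sigma_1\}\\
\cup\;&\{\langle s',(s_\downarrow,t_{a,s'_\downarrow})\rangle : s\xrightarrow{a}s',\,a\in\Sigma_2\}.
\end{aligned}
\end{equation*}
The idea is that the ``control'' states $(s_i,s_i)$ of $M_x$ directly represent $s$, while the intermediate product states containing a $t$-state are branching-equivalent to the target of the pending action because their only outgoing transition is a $\tau$ back into a control state related to that target.

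The verification splits by the four families. For $\langle s,(s_i,s_i)\rangle$, the forward direction uses \cref{eq3} when the $M$-transition $s\xrightarrow{a}s'$ belongs to $\Sigma_i$, matching it by zero $\tau$-steps followed by $a$ into a $t$-state of family (iii) or (iv); when $a$ belongs to the opposite alphabet $\Sigma_j$, one preparatory $\tau$-step via \cref{eq5} transfers control, landing in $(s_j,s_j)$ (still related to $s$), from which the match proceeds. For the backward direction from $(s_i,s_i)$, I will enumerate the only possible outgoing transitions in $M_x$: the control-swapping $\tau$ of \cref{eq5}, which is absorbed since both endpoints are related to $s$; and the $a$-transitions of \cref{eq3} for $a\in\Sigma_i$, each matched by the corresponding $M$-transition $s\xrightarrow{a}s'$. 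For pairs in families (iii) and (iv), the intermediate $t$-state has a unique outgoing transition, namely the $\tau$ produced by the $t$/$\overline{t}$-synchronisation in (\ref{eq3}), leading to a control state of family (i) or (ii) related to $s'$; so the backward clause is satisfied by a single absorbed $\tau$-step, and the forward clause reduces to the case already handled by prefacing it with this $\tau$-step.

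The main obstacle, and the place that needs care, is arguing that $(s_\uparrow,s_\uparrow)$ admits \emph{no} unexpected transitions in the synchronous product — in particular, that no action $b\in\Sigma_2$ can fire directly from an $\uparrow$-state and that the $\overline{t_{s'_\uparrow}}$-actions listed by \cref{eq1} for $M_2$ cannot spontaneously synchronise, because $M_1$ only enables the matching $t$-action from the intermediate $t$-state, not from $s_\uparrow$. This is precisely what forces an out-of-partition action to be preceded by a $c$-mediated $\tau$-swap of control, which is exactly the flexibility that the $\tau^*$-prefix in branching bisimulation provides; once this enumeration is made explicit, all remaining checks are routine applications of \cref{eq2,eq1,eq3,eq5}.
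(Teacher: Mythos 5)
Your relation $\mathcal{R}$ coincides with the one in the paper's proof (the paper writes the $t$-state families with an unrestricted index $n$ and action $a$, but well-formedness of the product states collapses them to exactly your families (iii) and (iv)), and your verification follows the same case analysis via the product transitions (\ref{eq3}) and (\ref{eq5}). This is essentially the paper's proof; if anything, your explicit treatment of the preparatory control-transfer $\tau$-step when the action lies in the opposite partition is slightly more careful than the paper's.
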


\begin{proof}

Let $M_1 = (S_C \cup S_{T_1}, \Sigma_1 \cup \Sigma_{S_1}, \rightarrow_1, q_1)$ and $M_2 = (S_C \cup S_{T_2}, \Sigma_2 \cup \Sigma_{S_2}, \rightarrow_2, q_2)$.

Define a relation $\mathcal{R} \subseteq S \times ((S_C \cup S_{T_1}) \times (S_C \cup S_{T_2}))$ with $\mathcal{R} = \{\langle s, (\qs{s}{n},\qs{s}{n}) \rangle$, $\langle r', (t_{a,\qs{r'}{n}},\qs{r}{n}) \rangle$, $\langle r', (\qs{r}{n},t_{a,\qs{r'}{n}}) \rangle$ $\mid s,r,r' \in S, n \in \{\uparrow,\downarrow\}$, $a \in \Sigma$, 
$r \xrightarrow{a} r' \}$.
We prove that $\mathcal{R}$ is a branching bisimulation relation through the following cases:
\begin{enumerate}
\item Consider a pair $\langle s, s_x \rangle = \langle s, (\qs{s}{n},\qs{s}{n}) \rangle$ where $n \in \{\uparrow,\downarrow\}$.
\begin{enumerate}
    \item Assume $s \xrightarrow{a} s'$. Then we have two cases: 
    \begin{enumerate}
        \item $a \in \Sigma_1$. Then, by~(\ref{eq3}), $s_x \xrightarrow{a}_x (t_{a,\qs{s'}{n}},\qs{s}{n})$. We see that $\langle s', (t_{a,\qs{s'}{n}},\qs{s}{n}) \rangle \in \mathcal{R}$.
        \item $a \in \Sigma_2$. Then, by~(\ref{eq3}), $s_x \xrightarrow{a}_x (\qs{s}{n},t_{a,\qs{s'}{n}})$. We see that $\langle s', (\qs{s}{n},t_{a,\qs{s'}{n}}) \rangle \in \mathcal{R}$.
        \end{enumerate}
        \item Assume $s_x \xrightarrow{a}_x s_x'$. Then we have the following three cases: 
        \begin{enumerate}
        \item $a \in \Sigma_1 \land \overline{a} \not\in \Sigma_2$, then this transition is only possible, by definition, through the transition $s \xrightarrow{a} s'$ for some $s'$ such that $s_x' \overset{\mathrm{(\ref{eq3})}}{=} (t_{a,s'},s_\uparrow)$. We see that $\langle s',s_x' \rangle \in \mathcal{R}$.
        \item $a \in \Sigma_2 \land \overline{a} \not\in \Sigma_1$. This is a symmetric case where $s \xrightarrow{a} s'$ and $s_x' \overset{\mathrm{(\ref{eq3})}}{=} (s_\downarrow,t_{a,s'})$.  We see that $\langle s',s_x' \rangle \in \mathcal{R}$.
        \item $a \in \Sigma_1 \land \overline{a} \in \Sigma_2$, then the only transition possible is the $\tau$ transition of~(\ref{eq5}). Then $s_x' = (\qs{s}{m},\qs{s}{m})$ where $m \in \{\uparrow,\downarrow\}$ and $m \neq n$. We see that $\langle s, s_x' \rangle \in \mathcal{R}$.
    \end{enumerate}
\end{enumerate}

\item Consider a pair $\langle r, r_x \rangle = \langle s', (t_{a,\qs{s'}{n}},\qs{s}{n}) \rangle$ where $n \in \{\uparrow,\downarrow\}$, $a \in \Sigma$ and $s \xrightarrow{a} s'$.
\begin{enumerate}
\item Assume $r \xrightarrow{a} r'$. Then we show that $r_x \xrightarrow{\tau}_x r_x'$ and $r_x' \xrightarrow{a}_x r_x''$ and $\langle r, r_x' \rangle \in \mathcal{R}$ and $\langle r', r_x'' \rangle \in \mathcal{R}$. We do this for $a \in \Sigma_1$. The case for $a \in \Sigma_2$ is symmetric.
\begin{enumerate}
\item $r_x' \overset{\mathrm{(\ref{eq3})}}{=} (s'_\downarrow,s'_\downarrow) $. We see that $\langle r, r_x' \rangle \in \mathcal{R}$.
\item Since $r = s'$ and $r \xrightarrow{a} r'$, then by~(\ref{eq3}), there exists a state $r_x''$ such that $r_x' \xrightarrow{a}_x r_x''$, and $r_x'' = (t_{a,s''_\downarrow},s'_\downarrow)$, where $s'' = r'$. We see that $\langle r', r_x'' \rangle \in \mathcal{R}$.
\end{enumerate}
\item Assume $r_x \xrightarrow{a}_x r_x'$. By the definition of $\rightarrow_x$, it is only possible that $a$ is a $\tau$ action and that $n=\uparrow$. Thus, $r_x' \overset{\mathrm{(\ref{eq3})}}{=} (s'_\downarrow,s'_\downarrow)$. We see that $\langle r', r_x'' \rangle \in \mathcal{R}$.
\end{enumerate}
\item Consider a pair $\langle r, r_x \rangle = \langle s', (\qs{s}{n},t_{a,\qs{s'}{n}}) \rangle$ where $n \in \{\uparrow,\downarrow\}$, $a \in \Sigma$ and $s \xrightarrow{a} s'$. This case is symmetric to Case 2.
\end{enumerate}

\end{proof}

\begin{corollary}
\label{thm:bb}
It follows from \cref{lem:bbproof} that there is a universal way of decomposing an LTS $M$ using a general synchronous decomposition operator (\cref{dfn:decomposition_general}) such that $M$ is branching-bisimilar to the synchronous product of its decomposition.
\end{corollary}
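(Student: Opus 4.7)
The plan is direct: the corollary is an existential claim and \cref{lem:bbproof} already exhibits $decomp_s$ as an explicit witness, so all that remains is to check that $decomp_s$ qualifies as an instance of the schema in \cref{dfn:decomposition_general} and then quote \cref{lem:bbproof}.

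First I would unfold the alphabets produced by $decomp_s$ on an input $(M, \Sigma_1, \Sigma_2)$ with $\Sigma = \Sigma_1 \cup \Sigma_2$ and $\Sigma_1 \cap \Sigma_2 = \emptyset$: component $M_1$ has action set $\Sigma_{M_1} = \Sigma_1 \cup \Sigma_{S_1} = \Sigma_1 \cup \Sigma_{T_1} \cup \Sigma_C \cup \overline{\Sigma_{T_2}}$, and symmetrically $\Sigma_{M_2} = \Sigma_2 \cup \Sigma_{T_2} \cup \overline{\Sigma_C} \cup \overline{\Sigma_{T_1}}$. Then I would verify the two conditions required by \cref{dfn:decomposition_general}. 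Containment $\Sigma_1 \subseteq \Sigma_{M_1}$ (respectively $\Sigma_2 \subseteq \Sigma_{M_2}$) is immediate from the definition. For the disjointness $\Sigma_{M_1} \cap \Sigma_2 = \emptyset$, the hypothesis $\Sigma_1 \cap \Sigma_2 = \emptyset$ handles the $\Sigma_1$-part, and the remaining pieces $\Sigma_{T_1}$, $\Sigma_C$ and $\overline{\Sigma_{T_2}}$ consist of freshly introduced $t$- and $c$-symbols indexed by states of $S$ (together with their bars), which are by construction disjoint from $\Sigma$ and in particular from $\Sigma_2$; the argument for $\Sigma_{M_2} \cap \Sigma_1 = \emptyset$ is mirror-image.

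With $decomp_s$ recognised as a general decomposition operator in the sense of \cref{dfn:decomposition_general}, I would then invoke \cref{lem:bbproof} verbatim to obtain $M \leftrightarroweq_b M_1 \times M_2$ for $(M_1, M_2) = decomp_s(M, \Sigma_1, \Sigma_2)$, which is exactly the existential statement claimed.

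I do not anticipate a genuine obstacle here: the substance was discharged in \cref{lem:bbproof}, and this corollary merely repackages that result in the vocabulary of \cref{dfn:decomposition_general}. The only mild subtlety worth a sentence is the implicit assumption that the newly minted $t$- and $c$-symbols are genuinely disjoint from the ambient alphabet $\Sigma$; this is intended by \cref{dfn:decomposition} but should be stated explicitly to make the disjointness check watertight.
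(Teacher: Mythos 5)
Your proposal is correct and matches the paper's intent exactly: the paper gives no explicit proof for this corollary, treating it as immediate from \cref{lem:bbproof} with $decomp_s$ as the witness, which is precisely your argument. Your explicit check that the fresh $t$- and $c$-actions keep the alphabets of $M_1$ and $M_2$ disjoint from $\Sigma_2$ and $\Sigma_1$ respectively is a small but welcome elaboration that the paper leaves implicit.
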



\begin{figure}
    \centering
\begin{tikzpicture}[->,shorten >=1pt,auto,node distance=2cm,on grid,semithick, inner sep=1mm, minimum size=0pt,bend angle=20, 
    line/.style={
      draw,thick,
      -latex',
      shorten >=2pt
    }]
\tikzstyle{rectangular} = [draw,rectangle,rounded corners]
\node [scale=1.4] (decomp) {$\leftrightarroweq_{b}$};
\node[initial,initial where=above,initial text=,state, minimum size=0pt] (r) [above left= 1.25cm and 1.7cm of decomp]                    {$r$};
\node[state,minimum size=0pt]         (s) [below left= 1.25cm and 1.7cm of decomp] {$s$};
\path [every node/.style={font=\footnotesize}, bend angle=37]
(r) edge              node [anchor=east] {$a_1$} (s)
(s) edge [bend left]  node {$a_2$} (r)
    edge [bend right] node [anchor=west] {$b_1$} (r)
;
\node[rectangular] (tr1) [right=of decomp] {$(t_{a_2,r_\uparrow},s_\uparrow)$};
\node[rectangular] (ts1) [right= 2.7cm of tr1] {$(t_{a_1,s_\uparrow},r_\uparrow)$};
\node[initial,initial where=above,initial text=,draw,rectangle,rounded corners, minimum size=0pt] (r1)                   [above =of ts1] {$(r_\uparrow,r_\uparrow)$};
\node[draw,rectangle,rounded corners, minimum size=0pt]         (r2) [right= 2.7cm of r1] {$(r_\downarrow,r_\downarrow)$};
\node[draw,rectangle,rounded corners, minimum size=0pt]         (s1) [below =of ts1] {$(s_\uparrow,s_\uparrow)$};
\node[draw,rectangle,rounded corners, minimum size=0pt]         (s2) [right= 2.7cm of s1] {$(s_\downarrow,s_\downarrow)$};
\node[draw,rectangle,rounded corners]         (tr2) at ($(r2)!0.5!(s2)$) {$(s_\downarrow,t_{b_1,r_\downarrow})$};
\draw [red,thick,dotted] 
   ($ (tr1.south west) + (-0.2,-0.3) $) -- 
   ($ (tr1.north west)+(-0.2,0.2) $) |- 
   ($ (r1.north west)+(-0.2,0.5) $) -- 
   ($ (r2.north east)+(0.4,0.5) $) |- 
   ($ (tr2.south east)+(0.18,-0.3) $) -- 
   ($ (tr2.south west)+(-0.2,-0.3) $) -- 
   ($ (tr2.north west)+(-0.2,1) $) -- 
   ($ (tr1.north east)+(0.2,1) $) -- 
   ($ (tr1.south east)+(0.2,-0.3) $) -- 
   cycle;
\draw [densely dashed] 
   ($ (ts1.south west)+(-0.2,-0.2) $) -- 
   ($ (ts1.north west)+(-0.2,0.25) $) -- 
   ($ (ts1.north east)+(0.2,0.25) $) -- 
   ($ (ts1.south east)+(0.2,-0.2) $) |- 
   ($ (s2.north east)+(0.2,0.4) $) -- 
   ($ (s2.south east)+(0.2,-0.4) $) -- 
   ($ (s1.south west)+(-0.25,-0.4) $) -| 
   cycle;

\path [every node/.style={font=\footnotesize}, bend angle=14]
(r1) edge              node {$\boldsymbol{a_1}$} (ts1)
     edge  [bend left] node {$\tau$} (r2)
(r2) edge  [bend left] node {$\tau$} (r1)
(ts1) edge node {$\tau$} (s1)
(s1) edge [bend left=40]  node {$\boldsymbol{a_2}$} (tr1)
     edge [bend left] node {$\tau$} (s2)
(tr1) edge [bend left=40] node {$\tau$} (r1)
(s2) edge node [swap] {$\boldsymbol{b_1}$} (tr2)
     edge [bend left] node {$\tau$} (s1)
(tr2) edge [swap] node {$\tau$} (r2)
;
\end{tikzpicture}
    \caption{Showing branching bisimulation on the example of \cref{fig:decompExample}.}
    \label{fig:bbexample}
\end{figure}
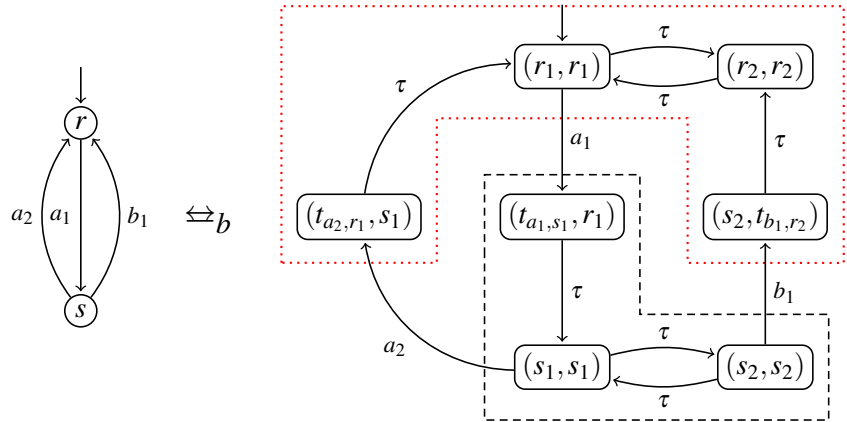

\subsection{Decomposing into Asynchronous Subsystems}
We consider asynchronous communication between subsystems simply because many practical systems use asynchronous communication and the aim is to extend our proof towards that.
So, we define a new decomposition operation ($decomp_a$) such that the communication between subsystems is asynchronous. We assign each subsystem a queue that stores received messages until they are consumed. An action of sending such a message does synchronise, however, with the queue of the opposite side receiving it.
The operation $decomp_a$ is summarised in \cref{fig:decompRules_async}.

\begin{figure}
    \centering
\begin{tikzpicture}[->,shorten >=1pt,auto,on grid,semithick, inner sep=0.3mm,node distance=2cm, minimum size=0pt,bend angle=20,  transform shape]
\node [state,inner sep=1mm, minimum size=0pt] (r) {$r$};
\node [state,inner sep=0.3mm, minimum size=0pt,draw=none] (decomp) [right= of r] {$\xrightarrow{decomp_a}$};
\node[state,inner sep=1mm, minimum size=0pt]         (s') [below right= 2cm and 0.75cm of r] {$s$};
\node[state,inner sep=1mm, minimum size=0pt]         (s) [below left= 2cm and 0.75cm of r] {$r$};
\node [state,inner sep=0.3mm, minimum size=0pt,draw=none] (decomp2) [right= of s'] {$\xrightarrow{decomp_a}_{a \in \Sigma_1}$};

\path [every node/.style={font=\footnotesize}, bend angle=37]
(s) edge node {$a$} (s')
;

\node[state,rectangle,rounded corners, minimum size=0pt] (r1) [right =of decomp] {$\Lbag r_\uparrow, \epsilon \Rbag$};
\node[state,rectangle,rounded corners, minimum size=0pt] (r2) [right= 2.8cm of r1] {$\Lbag r_\downarrow, \epsilon \Rbag$};
\node[state, elliptic state, rectangle,rounded corners, minimum size=0pt] (r2tr1) [above right= 1cm and 1.4cm of r1] {$\Lbag r_\downarrow, t_{r_\uparrow} \Rbag$};

\node[state, elliptic state, rectangle,rounded corners, minimum size=0pt] (s1) [right =of decomp2] {$\Lbag r_\uparrow, Q_1 \Rbag$};
\node[state, elliptic state, rectangle,rounded corners, minimum size=0pt] (tr1) [above right= 1cm and 1cm of s1] {$\Lbag t_{a,s_\uparrow}, Q_1 \Rbag$};
\node[state, elliptic state, rectangle,rounded corners, minimum size=0pt] (s2) [below right= 1cm and 1cm of tr1] {$\Lbag s_\uparrow, Q_1 \Rbag$};

\path [every node/.style={font=\footnotesize}, bend angle=17]
(r1) edge  [bend right] node {$c_{r_{1,2}}$} (r2)
(r2) edge  [bend right=32] node {$\overline{c_{r_{2,1}}}$} (r2tr1)
(r2tr1) edge [bend right=25] node {$\tau$} (r1)
(s1) edge [bend left=25] node {$a$} (tr1)
(tr1) edge [bend left=25] node {$t_{s_\uparrow}$} (s2)
;

\node[] (par) [right= 1cm of r2] {\Huge{$\mid$}};
\node[] (par2) [right= 1.1cm of s2] {\Huge{$\mid$}};

\node[state,rectangle,rounded corners, minimum size=0pt] (r1_2) [right =of r2] {$\Lbag r_\uparrow, \epsilon \Rbag$};
\node[state,rectangle,rounded corners, minimum size=0pt] (r2_2) [right= 2.8cm of r1_2] {$\Lbag r_\downarrow, \epsilon \Rbag$};
\node[state, elliptic state, rectangle,rounded corners, minimum size=0pt] (r1tr2) [above right= 1cm and 1.4cm of r1_2] {$\Lbag r_\uparrow, t_{r_\downarrow} \Rbag$};

\node[state, elliptic state, rectangle,rounded corners, minimum size=0pt] (s1_2) [right =2.3cm of s2] {$\Lbag r_\uparrow, Q_2 \Rbag$};
\node[state, elliptic state, rectangle,rounded corners, minimum size=0pt] (tr1_2) [above = 1cm of s1_2] {$\Lbag r, t_{s_\uparrow} \cdot Q_2 \Rbag$};
\node[state, elliptic state, rectangle,rounded corners, minimum size=0pt] (tr1_3) [right= 2cm of tr1_2] {$\Lbag r, Q_2 \cdot t_{s_\uparrow} \Rbag$};
\node[state, elliptic state, rectangle,rounded corners, minimum size=0pt] (s2_2) [below = 1cm of tr1_3] {$\Lbag s_\uparrow, Q_2 \Rbag$};

\path [every node/.style={font=\footnotesize}, bend angle=17]
(r1_2) edge  [swap, bend left=25] node {$\overline{c_{r_{1,2}}}$} (r1tr2)
(r1tr2) edge [swap, bend left=25] node {$\tau$} (r2_2)
(r2_2) edge  [swap, bend left] node {$c_{r_{2,1}}$} (r1_2)
(s1_2) edge [bend left=25] node {$\overline{t_{s_\uparrow}}$} (tr1_2)
(tr1_3) edge [bend left=25] node {$\tau$} (s2_2)
;

\end{tikzpicture}
    \caption{The two patterns that delineate the $decomp_a$ operator (\cref{dfn:decompositionasync2}).}
    \label{fig:decompRules_async}
\end{figure}
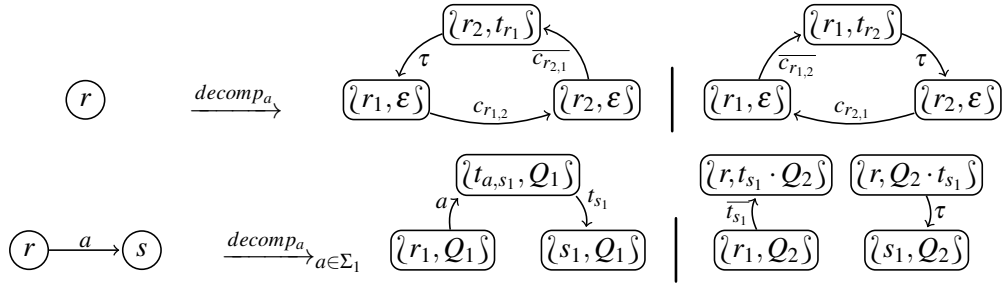





\begin{dfn} [LTS with Queue]
\label{dfn:lts-with-queue}
A queue is an ordered-list of actions. An LTS with a queue is a transition system of the shape $(S \times Q,\Sigma, \rightarrow, s_0)$. A state in~$S\times Q$ holds the contents of the queue~$Q$ and is written as~$\Lbag s, Q \Rbag$. 
\end{dfn}

Elements in a queue are concatenated using the~$\cdot$ operator. Appending an element~$m$ to the back of a queue~$Q$ produces the queue~$m \cdot Q$, while $Q \cdot m$ represents the queue with $m$ in the front. The symbol $\epsilon$ represents the empty queue. 

\begin{dfn} [Decomposing into asynchronous subsystems]
Given an LTS $M = (S, \Sigma, \rightarrow, r_0)$ and two alphabets $\Sigma_1, \Sigma_2$ such that $\Sigma = \Sigma_1 \cup \Sigma_2$ and $\Sigma_1 \cap \Sigma_2 = \emptyset$, then we can decompose $M$ over $\Sigma_1$ and $\Sigma_2$ by applying the following operation:



$decomp_a (M, \Sigma_1, \Sigma_2) = (M_1, M_2)$ where, for $i,j \in \{\uparrow,\downarrow\}$ and $i \neq j$, $M_i$ is an LTS with a queue (\cref{dfn:lts-with-queue}) defined as follows:
\begin{enumerate}
\item $M_i = ((S_C \cup S_{T_i}, Q_i), \Sigma_i \cup \Sigma_{S_i}, \rightarrow_i, r_1)$
\item For every state in $S$, we introduce a pair of states $s_\uparrow, s_\downarrow \in S_C$, a pair of $c$-actions, and a pair of $t$-actions: 
\begin{equation}
\begin{aligned}
&S_{C_i} = \{\qs{s}{i} \mid s \in S\} 
\hspace{1cm} S_C = S_{C_1} \cup S_{C_2}\\
&\Sigma_{C_{i,j}} = \{c_{s_{i,j}} \mid s \in S\} 
\hspace{0.4cm} \Sigma_{T_i} = \{t_{\qs{s}{i}} \mid s \in S \} 
\end{aligned}
\end{equation}
\item Sets of $t$-states are defined as follows:
    \begin{equation}
      \begin{aligned}
    &S_{T_i} = \{t_{a,\qs{s}{i}} \mid a \in \Sigma_i, \qs{s}{i} \in S_{C_i}\} \\
      \end{aligned}
    \end{equation}

\item Sets of synchronous actions are defined as follows: 
\begin{equation}
  \begin{aligned}
&\Sigma_{S_i} = \Sigma_{T_i} \cup \overline{\Sigma_{T_j}} \cup \Sigma_{C_{i,j}} \cup \overline{\Sigma_{C_{j,i}}} \\
  \end{aligned}
\end{equation}


\item The transition relation ${\rightarrow_i} \subseteq (S_C \cup S_{T_i}) \times Q_i \times (\Sigma_i \cup \Sigma_{S_i}) \times (S_C \cup S_{T_i}) \times Q_i$ is the minimal relation satisfying the following:

\begin{enumerate}
\item For all $s \in S$:
\begin{equation}
	\Lbag \qs{s}{i}, Q_i \Rbag \xrightarrow{c_{s_{i,j}}}_i \Lbag \qs{s}{j}, Q_i \Rbag \\
    \Lbag \qs{s}{i}, Q_i \Rbag \xrightarrow{\overline{c_{s_{i,j}}}}_j \Lbag \qs{s}{i}, t_{\qs{s}{j}} \cdot Q_i \Rbag \\
    \label{eq:a-transition-1}
\end{equation}

\item For all $s,s' \in S$, and all $a \in \Sigma_i$, if $s \xrightarrow{a} s'$, then: 
\begin{equation}
  \begin{aligned}
     \Lbag \qs{s}{i}, Q_i \Rbag \xrightarrow{a}_i \Lbag t_{a, \qs{s'}{i}}, Q_i \Rbag \xrightarrow{t_{\qs{s'}{i}}}_i \Lbag \qs{s'}{i}, Q_i \Rbag\\
     \Lbag \qs{s}{i}, Q_i \Rbag \xrightarrow{\overline{t_{\qs{s'}{i}}}}_j \Lbag \qs{s}{i},t_{\qs{s'}{i}} \cdot Q_i \Rbag \\
     \label{eq:a-transition-2}
  \end{aligned}
\end{equation}

\item Consuming an element from the front of a queue is an internal transition of the form:
\begin{equation}
\begin{aligned}
    \Lbag s, Q \cdot t_{s'} \Rbag \xrightarrow{\tau} \Lbag s', Q \Rbag \\
    \label{eq:a-rcv}
\end{aligned}
\end{equation}

\end{enumerate}
\end{enumerate}

\label{dfn:decompositionasync2}
\end{dfn}

We see in~(\ref{eq:a-transition-1}) that the two automata synchronise on action $c_{s_{i,j}}$. The effect is a message sent from $M_i$ and received in the queue of $M_j$. The same occurs in~(\ref{eq:a-transition-2}). Moreover, this makes sending messages only possible when both machines are in sync, i.e., on the same state $\qs{s}{i}$. This model is inspired by asynchronous communication in practice and we found it necessary that a sent message is received before any other actions are performed by either side.

The question may arise ``why do we still see synchronisation? Is this still considered asynchronous communication?''; and the answer is that it is asynchronous in the sense that the sending party does not know whether and when the receiving party is willing to receive the communication. This is different from synchronous communication where the sender knows that the receiver is willing to participate. 

For this construction, a queue of size 1 is enough as it never contains more than one message. If the size of the queue was more than 1, then the transition in~(\ref{eq:a-transition-2}) can apply recursively until the queue is full. However, any transition beyond the first one does not translate into a transition in the product because action $c_{s_{i,j}}$ needs to synchronise with its co-action. In other words, in the product automaton, the queue cannot contain more than one message. Therefore, the upperbound of size 1 of the queue is not a requirement, but rather follows from the construction.


\subsection{Proof that the Asynchronous Decomposition Operation Maintains Branching Bisimulation}
\label{sec:decomp-async-proof}

In this subsection, similar to \cref{sec:decomp-sync-proof}, we prove that the asynchronous decomposition operation ($decomp_a$) also maintains branching bisimilarity.  \cref{fig:decompExample2} shows the result of applying $decomp_a$ to the same example behaviour as in \cref{fig:decompExample}. In \cref{fig:async-bb}, we compute the synchronous product of the decomposition of \cref{fig:decompExample2} and then divide the nodes of the product into two equivalence classes, top and bottom. The states in the top class are branching bisimilar to state $r$ whereas the states in the bottom on are branching bisimilar to state $s$.

\begin{figure}
    \centering
\begin{tikzpicture}[->,shorten >=1pt,auto,node distance=2cm,on grid,semithick, inner sep=0.3mm, bend angle=20, transform shape
]
\node [label={[xshift=0.5cm,yshift=-4pt]$\underset{\{a_1,a_2\},\{b_1\}}{\xrightarrow{decomp_a}}$}] (decomp) { };
\node[initial,initial where=above,initial text=,state,inner sep=1mm, minimum size=0pt] (r) [above left= of decomp]                    {$r$};
\node[state,inner sep=1mm, minimum size=0pt]         (s) [below left=of decomp] {$s$};
\path [every node/.style={font=\footnotesize}, bend angle=37]
(r) edge              node [anchor=east] {$a_1$} (s)
(s) edge [bend left]  node {$a_2$} (r)
    edge [bend right] node [anchor=west] {$b_1$} (r)
;
\node[state, rectangle, rounded corners, minimum size=0pt] (tr1) [right= 2.3cm of decomp] {$\Lbag t_{a_2,r_\uparrow}, \epsilon \Rbag$};
\node[state, rectangle,rounded corners, minimum size=0pt] (ts1) [right= 1.8cm of tr1] {$\Lbag t_{a_1,s_\uparrow}, \epsilon \Rbag$};
\node[state, rectangle,rounded corners, minimum size=0pt] (str2) [right= of ts1] {$\Lbag s_\downarrow, t_{r_\downarrow} \Rbag$};
\node[initial,initial where=left,initial text=,state,rectangle,rounded corners, minimum size=0pt] (r1) [above right = 1.5cm and 0.9cm of tr1] {$\Lbag r_\uparrow, \epsilon \Rbag$};
\node[state,rectangle,rounded corners, minimum size=0pt] (s1) [below right = 1.5cm and 0.9cm of tr1] {$\Lbag s_\uparrow, \epsilon \Rbag$};
\node[state,rectangle,rounded corners, minimum size=0pt] (s2) [below= 1.5cm of str2] {$\Lbag s_\downarrow, \epsilon \Rbag$};
\node[state,rectangle,rounded corners, minimum size=0pt] (r2) [above= 1.5cm of str2] {$\Lbag r_\downarrow, \epsilon \Rbag$};
\node[state, elliptic state, rectangle,rounded corners, minimum size=0pt] (r2tr1) [above right= 1cm and 1.4cm of r1] {$\Lbag r_\downarrow, t_{r_\uparrow} \Rbag$};
\node[state, elliptic state, rectangle,rounded corners, minimum size=0pt] (s2ts1) [below right= 1cm and 1.4cm of s1] {$\Lbag s_\downarrow, t_{s_\uparrow} \Rbag$};

\path [every node/.style={font=\footnotesize}, bend angle=17]
(r1) edge [bend left=17]  node {$a_1$} (ts1)
     edge  [bend right] node {$c_{r_{1,2}}$} (r2)
(r2) edge  [bend right=32] node {$\overline{c_{r_{2,1}}}$} (r2tr1)
(r2tr1) edge [bend right=25] node {$\tau$} (r1)
(ts1) edge [bend left=17] node {$t_{s_\uparrow}$} (s1)
(s1) edge [bend left=17]  node {$a_2$} (tr1)
     edge [swap, bend left] node {$c_{s_{1,2}}$} (s2)
(tr1) edge [bend left=17] node {$t_{r_\uparrow}$} (r1)
(s2) edge node [swap] {$\overline{t_{r_\downarrow}}$} (str2)
     edge [swap, bend left=25] node {$\overline{c_{s_{2,1}}}$} (s2ts1)
(s2ts1) edge [swap, bend left=25] node {$\tau$} (s1)
(str2) edge [swap] node {$\tau$} (r2)
;

\node[] (par) [right= 35pt of $(r2)!0.5!(s2)$] {\Huge{$\mid$}};
\node[initial,initial where=left,initial text=,state, rectangle,rounded corners, minimum size=0pt] (r1_2) [right =3.7cm of r2] {$\Lbag r_\uparrow, \epsilon \Rbag $};
\node[state,rectangle,rounded corners, minimum size=0pt] (r2_2) [right = 2.7 cm of r1_2] {$\Lbag r_\downarrow, \epsilon \Rbag$};
\node[state, rectangle,rounded corners, minimum size=0pt] (s1_2) [right = 3.7cm of s2] {$\Lbag s_\uparrow, \epsilon \Rbag$};
\node[state, rectangle,rounded corners, minimum size=0pt] (s2_2) [right = 2.7 cm of s1_2] {$\Lbag s_\downarrow, \epsilon \Rbag$};
\node[state, elliptic state, rectangle,rounded corners, minimum size=0pt] (tr2) at ($(r2_2)!0.5!(s2_2)$) {$\Lbag t_{b_1,r_\downarrow}, \epsilon \Rbag$};
\node[state, elliptic state, rectangle,rounded corners, minimum size=0pt] (s1ts2) [below right= 1cm and 1.4cm of s1_2] {$\Lbag s_\uparrow, t_{s_\downarrow} \Rbag$};
\node[state, elliptic state, rectangle,rounded corners, minimum size=0pt] (r1tr2) [above right= 1cm and 1.4cm of r1_2] {$\Lbag r_\uparrow, t_{r_\downarrow} \Rbag$};
\node[state, elliptic state, rectangle,rounded corners, minimum size=0pt] (r1ts1) [right=  4.6cm of str2] {$\Lbag r_\uparrow, t_{s_\uparrow} \Rbag$};
\node[state, elliptic state, rectangle,rounded corners, minimum size=0pt] (s1tr1) [right= 2.8cm of str2] {$\Lbag s_\uparrow, t_{r_\uparrow} \Rbag$};

\path [every node/.style={font=\footnotesize}, bend angle=17]
(r1_2) edge [bend left] node {$\overline{t_{s_\uparrow}}$} (r1ts1)
     edge  [swap, bend left=25] node {$\overline{c_{r_{1,2}}}$} (r1tr2)
(r1ts1) edge [bend left] node {$\tau$} (s1_2)
(r1tr2) edge [swap, bend left=25] node {$\tau$} (r2_2)
(r2_2) edge  [swap, bend left] node {$c_{r_{2,1}}$} (r1_2)
(tr2) edge [swap] node {$t_{r_\downarrow}$} (r2_2)
(s1_2) edge [bend left]  node {$\overline{t_{r_\uparrow}}$} (s1tr1)
     edge [bend right=25] node {$\overline{c_{s_{1,2}}}$} (s1ts2)
(s1tr1) edge [bend left] node {$\tau$} (r1_2)
(s1ts2) edge [bend right=25] node {$\tau$} (s2_2)
(s2_2) edge node [swap] {$b_1$} (tr2)
     edge [bend right] node {$c_{s_{2,1}}$} (s1_2)
;

\end{tikzpicture}
    \caption{Example of asynchronous decomposition operation (\cref{dfn:decompositionasync2}).}
    \label{fig:decompExample2}
\end{figure}
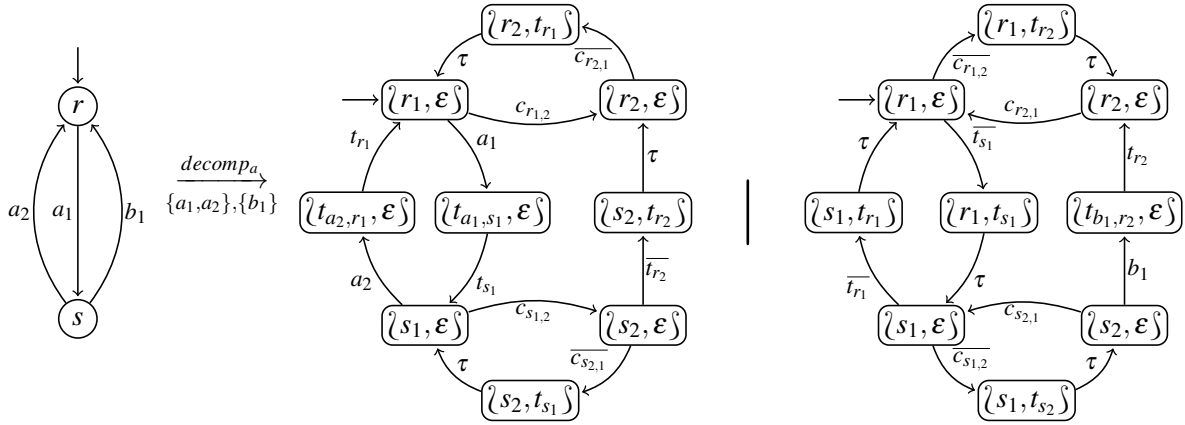

Next, we prove that any LTS decomposed using \cref{dfn:decompositionasync2} maintains branching bisimulation with its decomposition, thus by proving that there is at least one universal method of decomposing LTSs into asynchronous ones while maintaining branching bisimulation. 

\begin{thm}
\label{lem:bbproof-async}
Given an LTS $M = (S, \Sigma, \rightarrow, s_0)$ and two alphabets $\Sigma_1, \Sigma_2$ such that $\Sigma = \Sigma_1 \cup \Sigma_2$ and $\Sigma_1 \cap \Sigma_2 = \emptyset$, and given an LTS $M_x = M_1 \times M_2$ where $(M_1, M_2) = decomp_a(M)$ by \cref{dfn:decompositionasync2}, then $M \leftrightarroweq_b M_x$.
\end{thm}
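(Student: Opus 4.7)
The plan is to mirror the proof of \cref{lem:bbproof}: construct a witness relation $\mathcal{R}$ between states of $M$ and product states of $M_x$, and verify both clauses of \cref{dfn:branchingbisim} by a case split on the shape of the product component of each pair. The relation has to be richer than in the synchronous case because each $M$-transition is now realised in the product by three successive steps rather than two — the labelled $a$-step in $M_i$, the $\tau$ from the synchronisation of $t_{\qs{s'}{i}}$ with $\overline{t_{\qs{s'}{i}}}$, and the $\tau$ from the queue-drain rule~(\ref{eq:a-rcv}) in $M_j$ — and control transfer likewise decomposes into two $\tau$-steps (the $c$-sync followed by a queue drain). Concretely, for each $s\in S$, each $i\in\{\uparrow,\downarrow\}$ and $j\neq i$, I would put into $\mathcal{R}$: the synchronised pair $\langle s, (\Lbag \qs{s}{i},\epsilon\Rbag, \Lbag \qs{s}{i},\epsilon\Rbag)\rangle$; for each $s\xrightarrow{a} s'$ with $a\in\Sigma_i$, the post-label pair $\langle s', (\Lbag t_{a,\qs{s'}{i}},\epsilon\Rbag, \Lbag \qs{s}{i},\epsilon\Rbag)\rangle$ and the post-$t$-sync pair $\langle s', (\Lbag \qs{s'}{i},\epsilon\Rbag, \Lbag \qs{s}{i},t_{\qs{s'}{i}}\Rbag)\rangle$; and the control-transfer pair $\langle s, (\Lbag \qs{s}{j},\epsilon\Rbag, \Lbag \qs{s}{i},t_{\qs{s}{j}}\Rbag)\rangle$; together with the mirror images obtained by swapping the two components.

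With $\mathcal{R}$ in hand, the verification closely parallels the synchronous proof. For a sync pair, every $M$-transition $s\xrightarrow{a} s'$ with $a\in\Sigma_i$ is matched by the product's labelled $a$-step into the corresponding post-label pair related to $s'$, and every outgoing product step is either such a labelled step or a $\tau$-sync on a $c$-action that lands in a control-transfer pair still related to $s$. For each intermediate pair, the only enabled product transitions are $\tau$s — either completing a $t$- or $c$-synchronisation or firing a queue drain via~(\ref{eq:a-rcv}) — and each such step either stays inside the same $\mathcal{R}$-class or reaches the sync pair for the eventual target state; on the $M$-side the matching uses clause~(1) of \cref{dfn:branchingbisim} with no $\tau$-prefix.

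The main obstacle is a race-ahead phenomenon absent from the synchronous case: at a post-$t$-sync pair $\langle s', (\Lbag \qs{s'}{i},\epsilon\Rbag, \Lbag \qs{s}{i},t_{\qs{s'}{i}}\Rbag)\rangle$, the machine $M_i$ can still fire a further local action $b\in\Sigma_i$ whenever $s'\xrightarrow{b} s''$, producing a product state outside the list above. The cleanest remedy is to extend $\mathcal{R}$ inductively as the least relation containing the sync pairs and closed under (a) matching each $M$-transition by its product realisation and (b) following every outgoing product $\tau$-step while keeping the same $M$-component. A short invariance lemma, using $\Sigma_1\cap\Sigma_2=\emptyset$ together with inspection of rules~(\ref{eq:a-transition-1}), (\ref{eq:a-transition-2}) and (\ref{eq:a-rcv}), then characterises the reachable product states as those whose $M_i$-control and $M_j$-queue together encode an $M$-path with labels in $\Sigma_i$, reducing the bisimulation verification to the finite pattern of cases indicated above.
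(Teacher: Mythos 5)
Your proposal follows essentially the same route as the paper's proof: an explicit witness relation pairing each state of $M$ with the product configurations that realise it (synchronised, post-label, post-$t$-sync and control-transfer states, plus their mirror images), verified by a case split on the shape of the product component. The race-ahead phenomenon you isolate is precisely what forces the paper to add the extra pairs $\langle u, (t_{b,u_i}, \Lbag r_i, t_{s_i} \Rbag) \rangle$ (and their mirrors) for paths $r \xrightarrow{a} s \xrightarrow{b} u$ with $a,b \in \Sigma_i$; your closure formulation generates the same pairs, so the two arguments coincide in substance.
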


\begin{proof}

Let $M_1 = ((S_C \cup S_{T_1}, Q_1), \Sigma_1 \cup \Sigma_{S_1}, \rightarrow_1, \qs{r}{1})$ and $M_2 = ((S_C \cup S_{T_2}, Q_2), \Sigma_2 \cup \Sigma_{S_2}, \rightarrow_2, \qs{r}{1})$.

Define a relation $\mathcal{R} \subseteq S \times ((S_C \cup S_{T_1}, Q_1) \times (S_C \cup S_{T_2}, Q_2))$ with $\mathcal{R} = $
\[
\begin{array}{l@{}}
\{\langle s, (\Lbag s_i, \epsilon \Rbag,\Lbag s_i, \epsilon \Rbag) \rangle, \\

\langle s, (\Lbag s_i, t_{s_j}\Rbag, \Lbag s_j, \epsilon \Rbag) \rangle,
\langle s, (\Lbag s_j, \epsilon \Rbag, \Lbag s_i, t_{s_j}\Rbag) \rangle,\\

\langle s, (t_{a,s_i}, \Lbag r_i, \epsilon \Rbag ) \rangle,
\langle s, (\Lbag r_i, \epsilon \Rbag, t_{a,s_i}) \rangle,\\
\langle s, (\Lbag s_i, \epsilon \Rbag, \Lbag r_i, t_{s_i} \Rbag) \rangle,
\langle s, (\Lbag r_i, t_{s_i} \Rbag, \Lbag s_i, \epsilon \Rbag) \rangle,\\
\langle u, (t_{b,s'_i}, \Lbag r_i, t_{s_i} \Rbag) \rangle,
\langle u, (\Lbag r_i, t_{s_i} \Rbag, t_{b,s'_i}) \rangle \mid\\
r,s,u \in S \text{ and } i,j \in \{\uparrow,\downarrow\}\text{ where }i \neq j\text{ and }a,b \in \Sigma_i\text{ and }
r \xrightarrow{a} s \xrightarrow{b} u
 \}.\\
\end{array}
\]
We prove that $\mathcal{R}$ is a branching bisimulation relation through the following cases:
\begin{enumerate}
    \item \label{prf:case-1} Consider a pair $\langle s, s_x \rangle = \langle s, (\Lbag s_i, \epsilon \Rbag,\Lbag s_i, \epsilon \Rbag) \rangle$ where $i \in \{1,2\}$.
    \begin{enumerate}
        \item Assume $s \xrightarrow{a} s'$. Then if $a \in \Sigma_1$, then $s_x \xrightarrow{a}_1 s'_x$ where $s'_x \overset{(\ref{eq:a-transition-2})}{=} (t_{a,s'_i},\Lbag s_i, \epsilon \Rbag)$ with $i=1$. Else if $a \in \Sigma_2$ then $s_x \xrightarrow{a}_2 s''_x$ where $s''_x \overset{(\ref{eq:a-transition-2})}{=} (\Lbag s_i, \epsilon \Rbag,t_{a,s'_i})$ with $i=2$. We see that both pairs $\langle s, s'_x \rangle$ and $\langle s, s''_x \rangle$ are in $\mathcal{R}$.  
        \item Assume $s_x \xrightarrow{a}_x s'_x$. Then we have the following three cases: 
        \begin{enumerate}
        \item $a \in \Sigma_1 \land \overline{a} \not\in \Sigma_2$, then this transition is only possible, by definition, through the transition $s \xrightarrow{a} s'$ for some $s'$ such that $s_x' \overset{\mathrm{(\ref{eq:a-transition-2})}}{=} (t_{a,s'},s_\uparrow)$. We see that the pair $\langle s',s_x' \rangle \in \mathcal{R}$ and is covered in case~\ref{prf:case-4}.
        \item $a \in \Sigma_2 \land \overline{a} \not\in \Sigma_1$. This is a symmetric case where $s \xrightarrow{a} s'$ and $s_x' \overset{\mathrm{(\ref{eq:a-transition-2})}}{=} (s_\downarrow,t_{a,s'})$.  We see that the pair $\langle s',s_x' \rangle \in \mathcal{R}$ and is covered in case~\ref{prf:case-5}.
        \item $a \in \Sigma_1 \land \overline{a} \in \Sigma_2$, then the only transition possible is the $\tau$ transition of~(\ref{eq:a-transition-1}). Then either $s_x' = (\Lbag s_j, \epsilon \Rbag,\Lbag s_i, t_{s_j} \Rbag)$ or $s'_x = (\Lbag s_i, t_{s_j} \Rbag, \Lbag s_j, \epsilon \Rbag)$ where $j \in \{\uparrow,\downarrow\}$ and $j \neq i$. We see that in both possible values of $s'_x$, the pair $\langle s, s_x' \rangle \in \mathcal{R}$ and is covered in cases~\ref{prf:case-2}~and~\ref{prf:case-3}.
        \end{enumerate}
    \end{enumerate}
    \item \label{prf:case-2} Consider a pair $\langle s, s_x \rangle = \langle s, (\Lbag s_i, t_{s_j}\Rbag, \Lbag s_j, \epsilon \Rbag) \rangle$.
    \begin{enumerate}
        \item Assume $s \xrightarrow{a} s'$. Then $s_x \xrightarrow{\tau}_x (\Lbag s_j, \epsilon \Rbag, \Lbag s_j, \epsilon \Rbag)$, and we covered the pair $\langle s, (\Lbag s_j, \epsilon \Rbag, \Lbag s_j, \epsilon \Rbag) \rangle$ in case~\ref{prf:case-1}.
        \item Assume $s_x \xrightarrow{a}_x s'_x$. The only possible transition in {$\rightarrow_x$} is if $a$ is a $\tau$ action consuming the queue message $t_{s_j}$ then $s'_x = (\Lbag s_j, \epsilon \Rbag, \Lbag s_j, \epsilon \Rbag)$ and we covered the pair $\langle s, (\Lbag s_j, \epsilon \Rbag, \Lbag s_j, \epsilon \Rbag) \rangle$ in case~\ref{prf:case-1}.
    \end{enumerate}
    \item \label{prf:case-3} Consider a pair $\langle s, s_x \rangle = \langle s, (\Lbag s_j, \epsilon \Rbag, \Lbag s_i, t_{s_j}\Rbag) \rangle$. Symmetric to case~\ref{prf:case-2}.
    \item \label{prf:case-4} Consider a pair $\langle s, s_x \rangle = \langle s, (t_{a,s_i}, \Lbag r_i, \epsilon \Rbag ) \rangle$ such that $r \xrightarrow{a} s$. 
    \begin{enumerate}
        \item Assume $s \xrightarrow{b} s'$. Then $s_x \xrightarrow{\tau}_x (\Lbag s_i, \epsilon \Rbag, \Lbag r_i, t_{s_i} \Rbag)$.
        \item Assume $s_x \xrightarrow{b}_x s'_x$. The only possible transition in {$\rightarrow_x$} is if $b$ is a $\tau$ action resulting from the synchronisation of the two transitions $t_{a,s_i} \xrightarrow{t_{s_i}}_i \Lbag s_i, \epsilon \Rbag$ and $\Lbag r_i, \epsilon \Rbag \xrightarrow{\overline{t_{s_i}}}_j \Lbag r_i, t_{s_i} \Rbag$. Then, in the product, $s_x \xrightarrow{\tau}_x (\Lbag s_i, \epsilon \Rbag, \Lbag r_i, t_{s_i} \Rbag)$; and the pair $\langle s, (\Lbag s_i, \epsilon \Rbag, \Lbag r_i, t_{s_i} \Rbag) \rangle \in \mathcal{R}$ and is covered in case~\ref{prf:case-6}.
    \end{enumerate}
    \item \label{prf:case-5} Consider a pair $\langle s, s_x \rangle = \langle s, (\Lbag r_i, \epsilon \Rbag, t_{a,s_i}) \rangle$. Symmetric to case~\cref{prf:case-4}.
    \item \label{prf:case-6} Consider a pair $\langle s, s_x \rangle = \langle s, (\Lbag s_i, \epsilon \Rbag, \Lbag r_i, t_{s_i} \Rbag) \rangle$ such that $r \xrightarrow{a} s$.
    \begin{enumerate}
        \item Assume $s \xrightarrow{b} s'$. Then because of the queue-consuming transition $ \Lbag r_i, t_{s_i} \Rbag \xrightarrow{\tau}_j \Lbag s_i, \epsilon \Rbag$, then $s_x \xrightarrow{\tau}_x (\Lbag s_i, \epsilon \Rbag, \Lbag s_i, \epsilon \Rbag)$.\\ The pair $\langle s, (\Lbag s_i, \epsilon \Rbag,\Lbag s_i, \epsilon \Rbag) \rangle$ is covered in case~\ref{prf:case-1}.
        \item Assume $s_x \xrightarrow{b}_x s_x'$, then there are two possible values for $b$:
        \begin{enumerate}
            \item Action $b$ is a queue-consuming $\tau$, then $s_x \xrightarrow{\tau}_x (\Lbag s_i, \epsilon \Rbag, \Lbag s_i, \epsilon \Rbag)$; and the pair \\ $\langle s, (\Lbag s_i, \epsilon \Rbag,\Lbag s_i, \epsilon \Rbag) \rangle$ is covered in case~\ref{prf:case-1}.
            \item $b \in \Sigma_i$, then $s_x \xrightarrow{b}_x (t_{b,s'_i}, \Lbag r_i, t_{s_i} \Rbag)$ such that $s \xrightarrow{b} s'$; and the pair $\langle s', (t_{b,s'_i}, \Lbag r_i, t_{s_i} \Rbag) \rangle \in \mathcal{R}$.
        \end{enumerate}
    \end{enumerate}
    \item Consider a pair $\langle s, s_x \rangle = \langle s, \Lbag r_i, t_{s_i} \Rbag, \Lbag s_i, \epsilon \Rbag) \rangle$. Symmetric to case~\ref{prf:case-6}.
    \item \label{prf:case-8} Consider a pair $\langle s, s_x \rangle = \langle s, (t_{a,s_i}, \Lbag \Lbag p_i, \epsilon \Rbag, t_{r_i} \Rbag) \rangle$ such that $p \xrightarrow{b} r \xrightarrow{a} s$. Then $s_x \xrightarrow{\tau} (t_{a,s_i},\Lbag r_i, \epsilon \Rbag)$; and the pair $\langle s, (t_{a,s_i},\Lbag r_i, \epsilon \Rbag) \rangle$ is covered in case~\ref{prf:case-4}.
    \item Consider a pair $\langle s, s_x \rangle = \langle s, (\Lbag p_i, t_{r_i} \Rbag, t_{a,s_i}) \rangle$ such that $p \xrightarrow{b} r \xrightarrow{a} s$. This is symmetric to case~\ref{prf:case-8}.
\end{enumerate}
\end{proof}

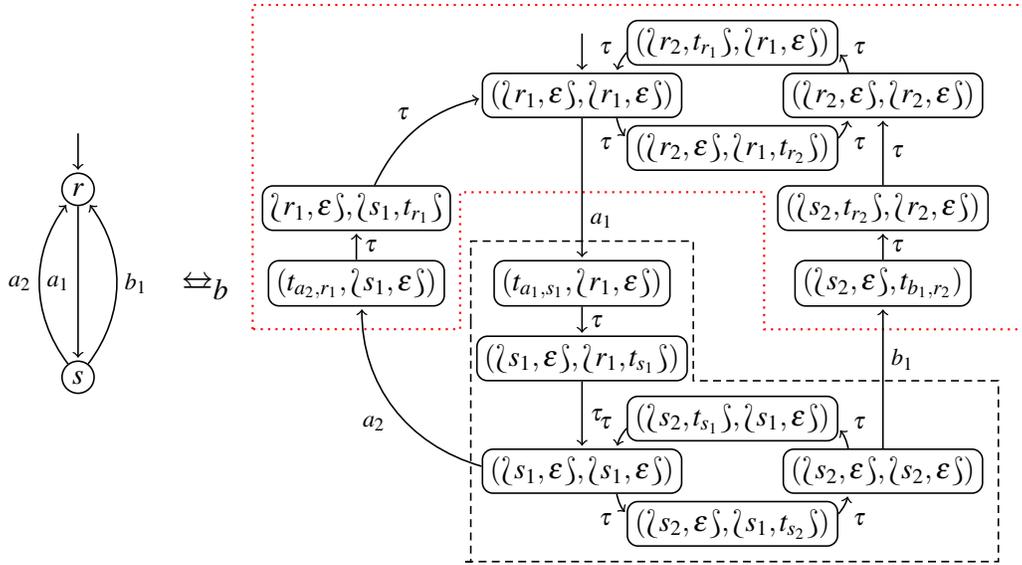
\begin{figure}
    \centering
\begin{tikzpicture}[->,shorten >=1pt,auto,node distance=2cm,on grid,semithick, inner sep=1mm, minimum size=0pt,bend angle=20, 
    line/.style={
      draw,thick,
      -latex',
      shorten >=2pt
    }]
\node [scale=1.4] (decomp) {$\leftrightarroweq_{b}$};
\node[initial,initial where=above,initial text=,state, minimum size=0pt] (r) [above left= 1.25cm and 1.7cm of decomp]                    {$r$};
\node[state,minimum size=0pt]         (s) [below left= 1.25cm and 1.7cm of decomp] {$s$};
\path [every node/.style={font=\footnotesize}, bend angle=37]
(r) edge              node [anchor=east] {$a_1$} (s)
(s) edge [bend left]  node {$a_2$} (r)
    edge [bend right] node [anchor=west] {$b_1$} (r)
;
\node[draw,rectangle,rounded corners] (tr1) [right=of decomp] {$(t_{a_2,r_\uparrow},\Lbag s_\uparrow, \epsilon \Rbag)$};
\node[draw,rectangle,rounded corners] (ts1) [right= 3cm of tr1] {$(t_{a_1,s_\uparrow},\Lbag r_\uparrow, \epsilon \Rbag)$};
\node[initial,initial where=above,initial text=,draw,rectangle,rounded corners, minimum size=0pt] (r1)                   [above =2.5cm of ts1] {$(\Lbag r_\uparrow, \epsilon \Rbag,\Lbag r_\uparrow, \epsilon \Rbag)$};
\node[draw,rectangle,rounded corners, minimum size=0pt]         (r2) [right= 4cm of r1] {$(\Lbag r_\downarrow, \epsilon \Rbag,\Lbag r_\downarrow, \epsilon \Rbag)$};
\node[draw,rectangle,rounded corners, minimum size=0pt]         (r2r1) [above right= 0.7cm and 2cm of r1] {$(\Lbag r_\downarrow, t_{r_\uparrow} \Rbag, \Lbag r_\uparrow, \epsilon \Rbag)$};
\node[draw,rectangle,rounded corners, minimum size=0pt]         (r1r2) [below right= 0.7cm and 2cm of r1] {$(\Lbag r_\downarrow, \epsilon \Rbag, \Lbag r_\uparrow, t_{r_\downarrow}\Rbag)$};
\node[draw,rectangle,rounded corners, minimum size=0pt]         (s1) [below = 2.5cm of ts1] {$(\Lbag s_\uparrow, \epsilon \Rbag,\Lbag s_\uparrow, \epsilon \Rbag)$};
\node[draw,rectangle,rounded corners, minimum size=0pt]         (s2) [right= 4cm of s1] {$(\Lbag s_\downarrow, \epsilon \Rbag,\Lbag s_\downarrow, \epsilon \Rbag)$};
\node[draw,rectangle,rounded corners, minimum size=0pt]         (s2s1) [above right= 0.7cm and 2cm of s1] {$(\Lbag s_\downarrow, t_{s_\uparrow} \Rbag, \Lbag s_\uparrow, \epsilon \Rbag)$};
\node[draw,rectangle,rounded corners, minimum size=0pt]         (s1s2) [below right= 0.7cm and 2cm of s1] {$(\Lbag s_\downarrow, \epsilon \Rbag,\Lbag s_\uparrow, t_{s_\downarrow}\Rbag)$};
\node[draw,rectangle,rounded corners] (tr2) at ($(r2)!0.5!(s2)$) {$(\Lbag s_\downarrow, \epsilon \Rbag,t_{b_1,r_\downarrow})$};
\node[draw,rectangle,rounded corners] (tr2t) [above = 1cm of tr2] {$(\Lbag s_\downarrow, t_{r_\downarrow}\Rbag, \Lbag r_\downarrow, \epsilon \Rbag)$};
\node[draw,rectangle,rounded corners] (ts1t) [below = 1cm of ts1] {$(\Lbag s_\uparrow, \epsilon \Rbag, \Lbag r_\uparrow, t_{s_\uparrow} \Rbag)$};
\node[draw,rectangle,rounded corners] (tr1t) [above = 1cm of tr1] {$\Lbag r_\uparrow, \epsilon \Rbag, \Lbag s_\uparrow, t_{r_\uparrow} \Rbag$};
\draw [red,thick,dotted] 
  ($ (tr1.south west) + (-0.2,-0.3) $) -- 
  ($ (tr1.north west)+(-0.2,0.2) $) |- 
  ($ (r1.north west)+(-0.2,0.9) $) -- 
  ($ (r2.north east)+(0.6,0.9) $) |- 
  ($ (tr2.south east)+(0.18,-0.3) $) -- 
  ($ (tr2.south west)+(-0.4,-0.3) $) -- 
  ($ (tr2.north west)+(-0.4,.9) $) -- 
  ($ (tr1.north east)+(0.2,.9) $) -- 
  ($ (tr1.south east)+(0.2,-0.3) $) -- 
  cycle;
\draw [densely dashed] 
  ($ (ts1.south west)+(-0.3,-0.2) $) -- 
  ($ (ts1.north west)+(-0.3,0.25) $) -- 
  ($ (ts1.north east)+(0.3,0.25) $) -- 
  ($ (ts1.south east)+(0.3,-0.2) $) |- 
  ($ (s2.north east)+(0.2,0.9) $) -- 
  ($ (s2.south east)+(0.2,-0.9) $) -- 
  ($ (s1.south west)+(-0.25,-0.9) $) -| 
  cycle;

\path [every node/.style={font=\footnotesize}, bend angle=14]
(r1) edge  [pos=0.72] node {$\boldsymbol{a_1}$} (ts1)
     edge  [bend right, swap] node {$\tau$} (r1r2)
(r1r2) edge  [bend right, swap] node {$\tau$} (r2)
(r2) edge  [bend right, swap] node {$\tau$} (r2r1)
(r2r1) edge  [bend right, swap] node {$\tau$} (r1)
(ts1) edge node {$\tau$} (ts1t)
(ts1t) edge node {$\tau$} (s1)
(s1) edge [bend left=37]  node {$\boldsymbol{a_2}$} (tr1)
     edge [bend right, swap] node {$\tau$} (s1s2)
(s1s2) edge [bend right, swap] node {$\tau$} (s2)
(tr1) edge [swap] node {$\tau$} (tr1t)
(tr1t) edge [bend left=25] node {$\tau$} (r1)
(s2) edge node [swap, pos=0.63] {$\boldsymbol{b_1}$} (tr2)
     edge [bend right, swap] node {$\tau$}  (s2s1)
(s2s1) edge [bend right, swap] node {$\tau$} (s1)
(tr2) edge [swap] node {$\tau$} (tr2t)
(tr2t) edge [swap] node {$\tau$} (r2)
;
\end{tikzpicture}
    \caption{Showing branching bisimulation following \cref{fig:decompExample2}.}
    \label{fig:async-bb}
\end{figure}

\begin{corollary}
\label{thm:bb-async}
It follows from \cref{lem:bbproof-async} that there is a universal way of decomposing an LTS $M$
using a general asynchronous decomposition operator (\cref{dfn:decomposition_general}) such that $M$ is branching-bisimilar to the synchronous product of its decomposition.
\end{corollary}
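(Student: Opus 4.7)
The plan is to observe that this corollary is essentially an existence statement: the operator $decomp_a$ defined in \cref{dfn:decompositionasync2} already is a witness, so all that remains is to check it meets the requirements of a \emph{general} asynchronous decomposition operator in the sense of \cref{dfn:decomposition_general}, after which the branching-bisimilarity conclusion is handed to us by \cref{lem:bbproof-async}.

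Concretely, I would first unfold the definition of $decomp_a(M,\Sigma_1,\Sigma_2)=(M_1,M_2)$ and read off the alphabets: $M_1$ has action set $\Sigma_1 \cup \Sigma_{S_1}$ and $M_2$ has action set $\Sigma_2 \cup \Sigma_{S_2}$. To qualify as a general decomposition operator I must verify $\Sigma_1 \subseteq \Sigma_{M_1}$ (immediate), $\Sigma_{M_1} \cap \Sigma_2 = \emptyset$, and symmetrically for $M_2$. Inspecting the construction, $\Sigma_{S_1}$ and $\Sigma_{S_2}$ consist entirely of freshly introduced families of $c$- and $t$-actions (and their co-actions) indexed by states of $M$; since these symbols do not belong to the original alphabet $\Sigma$, disjointness with $\Sigma_1$ and $\Sigma_2$ is automatic. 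This confirms that $decomp_a$ is a general asynchronous decomposition operator.

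With this in hand, the corollary is concluded simply by taking $decomp_a$ as the universal operator: for any LTS $M$ and any partition $\Sigma = \Sigma_1 \cup \Sigma_2$ into disjoint sets, \cref{lem:bbproof-async} yields $M \leftrightarroweq_b M_1 \times M_2$.

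I do not expect any real obstacle here; the only point to be pedantic about is the fresh-names convention for the auxiliary action families, which is implicit in \cref{dfn:decompositionasync2}. Everything else is a direct instantiation of the preceding theorem.
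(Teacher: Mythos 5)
Your proposal is correct and matches the paper's intended reasoning exactly: the paper treats this corollary as an immediate consequence of \cref{lem:bbproof-async}, with $decomp_a$ serving as the witnessing general decomposition operator, and gives no further argument. Your additional check that the fresh $c$- and $t$-action families keep the alphabets disjoint as required by \cref{dfn:decomposition_general} is a reasonable bit of pedantry that the paper leaves implicit.
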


\section[Proof that no decomposition operation maintains divergence-preserving branching bisimulation]{Proof that no decomposition operation maintains $\leftrightarroweq_{db}$}
\label{sec:proofdpbb}

In this section, we prove that there is no way of decomposing an LTS such that it is divergence-preserving branching-bisimilar to the synchronous product of its decomposition.

We define confluence based on~\cite{groote1996confluence}.

\begin{dfn}[Confluence]
An LTS $(S, \Sigma_1 \cup \Sigma_2, \rightarrow, s_0)$ is called confluent over $\Sigma_1$ and $\Sigma_2$ iff for all states $s,s_a,s_b \in S$ and for all $a \in \Sigma_1$ and $b \in \Sigma_2$, if $s \xrightarrow{a} s_a$ and $s \xrightarrow{b} s_b$, then there is a state $s_c$ such that $s_b \xrightarrow{a} s_c$ and $s_a \xrightarrow{b} s_c$.
\end{dfn}

\begin{lem}
\label{lem:product-confluence}
Any LTS $M$ that is the synchronous product (\cref{dfn:syncproduct}) of two LTSs $M_1$ and $M_2$ whose action sets are $\Sigma_1$ and $\Sigma_2$ respectively, is confluent over two sets $\Sigma_1 \setminus \overline{\Sigma_2}$ and $\Sigma_2 \setminus \overline{\Sigma_1}$.
\end{lem}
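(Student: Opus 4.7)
The plan is to unfold the definition of the synchronous product (\cref{dfn:syncproduct}) and observe that actions in $\Sigma_1 \setminus \overline{\Sigma_2}$ and $\Sigma_2 \setminus \overline{\Sigma_1}$ can only originate from the first two clauses of that definition, namely the non-synchronising ones. Concretely, let $M = M_1 \times M_2 = (S_1 \times S_2, \Sigma_x, \rightarrow_x, (q_0,r_0))$, fix a state $(s,t) \in S_1 \times S_2$, and suppose $(s,t) \xrightarrow{a}_x p_a$ and $(s,t) \xrightarrow{b}_x p_b$ for some $a \in \Sigma_1 \setminus \overline{\Sigma_2}$ and $b \in \Sigma_2 \setminus \overline{\Sigma_1}$. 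The goal is to exhibit a common successor state $s_c$ such that $p_a \xrightarrow{b}_x s_c$ and $p_b \xrightarrow{a}_x s_c$.

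First I would argue that the $a$-transition must arise from the first clause of \cref{dfn:syncproduct}. The third clause is ruled out because it produces a $\tau$ label rather than $a$, and because $\overline{a} \notin \Sigma_2$. The second clause is ruled out because $a \in \Sigma_1$ and $\Sigma_1 \cap \Sigma_2$ plays no role here, but more directly because the second clause requires $a \in \Sigma_2$; if one were worried about $a$ lying in both alphabets, the condition $\overline{a} \notin \Sigma_1$ from the second clause together with $a \in \Sigma_1$ would be inconsistent (the clause would not apply to this transition as labelled from the first component). Hence $p_a = (s',t)$ with $s \xrightarrow{a}_1 s'$. A symmetric argument gives $p_b = (s, t')$ with $t \xrightarrow{b}_2 t'$.

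Next I would define $s_c := (s', t')$ and verify the two required transitions. Applying the first clause of \cref{dfn:syncproduct} to the $M_1$-transition $s \xrightarrow{a}_1 s'$ at the product state $(s,t')$ gives $(s,t') \xrightarrow{a}_x (s',t')$; and applying the second clause to the $M_2$-transition $t \xrightarrow{b}_2 t'$ at the product state $(s',t)$ gives $(s',t) \xrightarrow{b}_x (s',t')$. This closes the confluence diamond.

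The argument contains no real obstacle; the only point that requires a little care is checking that, because $a \notin \overline{\Sigma_2}$ and $b \notin \overline{\Sigma_1}$, neither transition could have come from the synchronising clause, so that the first components of $p_b$ and the second components of $p_a$ genuinely remain unchanged and the diamond can be closed by a single swap of the independent transitions.
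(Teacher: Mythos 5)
Your proof is correct and takes essentially the same route as the paper's: both identify that, because $\overline{a}\notin\Sigma_2$ and $\overline{b}\notin\Sigma_1$, the two transitions are non-synchronising moves of the first and second component respectively, so the diamond closes at $(s',t')$. One small caveat: your parenthetical claim that ``$\overline{a}\notin\Sigma_1$ together with $a\in\Sigma_1$ would be inconsistent'' is not right as stated ($a$ and $\overline{a}$ are distinct actions), so ruling out the second clause really rests on $\Sigma_1$ and $\Sigma_2$ being disjoint --- an assumption the paper's own proof also makes silently and which holds where the lemma is applied.
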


\begin{proof}
Consider the synchronous product $(S_1 \times S_2,\Sigma_x, \rightarrow_x, (q_0, r_0))$ from \cref{dfn:syncproduct} and some actions $a \in \Sigma_1 \setminus \overline{\Sigma_2}$ and $b \in \Sigma_2 \setminus \overline{\Sigma_1}$. Then $a \neq \overline{b}$. Consider some states $s, s' \in S_1$, $t, t' \in S_2$, and $s_a \in S_1 \times S_2$. We know that if $(s,t) \xrightarrow{a} s_a$ then that is due to a transition $s \xrightarrow{a} s'$  and that makes $s_a = (s',t)$, and that given a transition $t \xrightarrow{b} t'$, then a transition $(s',t) \xrightarrow{b} (s',t')$ is possible. Similarly, if $(s,t) \xrightarrow{b} (s,t')$ then $(s,t') \xrightarrow{a} (s',t')$. Therefore, the defined synchronous product is confluent.
\end{proof}

\cref{fig:dpbb-example} (centre) shows a simple LTS $P$.
Concretely, it is defined as $(\{p,r,s\}, \Sigma_1 \cup \Sigma_2, {\rightarrow}, p)$ with alphabets $\Sigma_1=\{a\}$ and $\Sigma_2=\{b\}$ and transitions $p \xrightarrow{a} r$ and $p \xrightarrow{b} s$. In the following lemma and theorem, we prove that no way of decomposing $P$ maintains divergence-preserving branching bisimulation.

\begin{lem}
\label{lem:aorb}
Given the LTS $P$ (\cref{fig:dpbb-example}, centre) with action set $\Sigma_1 \cup \Sigma_2$, let $P_1$ and $P_2$ be two LTSs with action sets $\Sigma_{P_1}$ and $\Sigma_{P_2}$ respectively, and with $\Sigma_1 \subseteq \Sigma_{P_1}$ and $\Sigma_2 \subseteq \Sigma_{P_2}$ and $\Sigma_1 \cap \Sigma_2 = \emptyset = \Sigma_1 \cap \Sigma_{P_2} = \Sigma_{P_1} \cap \Sigma_2 $.
Let $P_x$ be the synchronous product $P_1 \times P_2$ by \cref{dfn:syncproduct}. Then $P \not\leftrightarroweq_{db} P_x$. 
\end{lem}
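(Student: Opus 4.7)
The plan is to derive a contradiction from the assumption $P\leftrightarroweq_{db}P_x$, exploiting \cref{lem:product-confluence}. Fix a putative divergence-preserving branching bisimulation $\mathcal{R}$ with $\langle p,(q_0,r_0)\rangle\in\mathcal{R}$. The first step is to observe that both $a$ and $b$ must remain visible in $P_x$. If $\overline{a}\in\Sigma_{P_2}$, then by \cref{dfn:syncproduct} no transition of $P_x$ carries the label $a$, and the transition $p\xrightarrow{a} r$ admits no matching $\tau^{*}a$-path in $P_x$; hence $\overline{a}\notin\Sigma_{P_2}$, and symmetrically $\overline{b}\notin\Sigma_{P_1}$. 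Consequently \cref{lem:product-confluence} applies and $P_x$ is confluent over $\{a\}$ and $\{b\}$.

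The heart of the argument is to isolate a state $u^{\star}$ in $P_x$ that is related to $p$ and has no outgoing $\tau$-transition. Since $p$ has no $\tau$-transition, every $\tau$-step $u\xrightarrow{\tau} u'$ from a state $u$ related to $p$ must be matched through the stay-case of branching bisimulation, forcing $\langle p,u'\rangle\in\mathcal{R}$; therefore the class $U_p=\{u\mid\langle p,u\rangle\in\mathcal{R}\}$ is closed under $\tau$-steps. Divergence preservation then forbids any infinite $\tau$-sequence inside $U_p$, since such a sequence would require a matching infinite $\tau$-sequence from $p$, which has none available. Hence any maximal $\tau$-path from $(q_0,r_0)$ in $P_x$ is finite and terminates at a desired $u^{\star}\in U_p$.

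From $u^{\star}$, matching $p\xrightarrow{a} r$ and $p\xrightarrow{b} s$ collapses to direct transitions $u^{\star}\xrightarrow{a} u_a$ and $u^{\star}\xrightarrow{b} u_b$ with $\langle r,u_a\rangle,\langle s,u_b\rangle\in\mathcal{R}$, because $u^{\star}$ has no $\tau$-steps to traverse first. Confluence then yields a state $u_c$ with $u_a\xrightarrow{b} u_c$, a transition that must be matched from $r$ by some $r\xrightarrow{\tau}^{*}r'\xrightarrow{b} r''$; since $r$ has no outgoing transitions at all, this is impossible and we obtain the required contradiction. I expect the main subtlety to be step 2, namely justifying that $U_p$ is $\tau$-closed and that divergence preservation genuinely rules out infinite $\tau$-chains within it; once $u^{\star}$ is in hand, the confluence clash is immediate.
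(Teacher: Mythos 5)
Your proposal is correct and follows essentially the same route as the paper's proof: rule out $\tau$-divergence among the states related to $p$, reach a $\tau$-free state $u^{\star}$, match $p\xrightarrow{a}r$ and $p\xrightarrow{b}s$ by direct transitions from $u^{\star}$, and invoke \cref{lem:product-confluence} to force a $b$-transition after $a$ that $r$ cannot match. Your preliminary check that $\overline{a}\notin\Sigma_{P_2}$ and $\overline{b}\notin\Sigma_{P_1}$ (so that $a$ and $b$ stay visible and confluence applies) is a detail the paper leaves implicit; otherwise the two arguments coincide.
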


\begin{proof}
We prove this lemma by contradiction. Assume that $P \leftrightarroweq_{db} P_x$, and let $p_x$ be the initial state of $P_x$, then $p \leftrightarroweq_{db} p_x$. As $p$ is not divergent and cannot do a $\tau$-transition, it holds that only finite sequences of $\tau$'s are possible from $p_x$. This can be seen as follows. If $p_x \xrightarrow{\tau} p_1 \xrightarrow{\tau} p_2 \xrightarrow{\tau} \cdots$, then $p \leftrightarroweq_b p_i$ for all $i > 0$. Hence, $p_x$ is divergent. But this is not possible because $p$ is not divergent.
So, $p_x$ takes a finite number of $\tau$ steps to reach some state $p'_x$ where $p'_x \not\xrightarrow{\tau}$.

Since it must be that $p \leftrightarroweq_{db} p'_x$, and since $p \xrightarrow{a} r$ and $p \xrightarrow{b} s$ where $a \in \Sigma_1$ and $b \in \Sigma_2$, then there are two states $r_x$ and $s_x$ such that $p'_x \xrightarrow{a} r_x$ and $p'_x \xrightarrow{b} s_x$, and $r \leftrightarroweq_{db} r_x$ and $s \leftrightarroweq_{db} s_x$. 

Now because $a \in \Sigma_1 \setminus \overline{\Sigma_2}$ and $b \in \Sigma_2 \setminus \overline{\Sigma_1}$, then $P_x$ is confluent over these two sets, then there must exist a state $p''_x$ such that $r_x \xrightarrow{b} p''_x$. However, $r \not\xrightarrow{b}$. Therefore, $r \not\leftrightarroweq_{db} r_x$. Contradiction. Therefore $P \not\leftrightarroweq_{db} P_x$.
\end{proof}

The proof is illustrated in \cref{fig:dpbb-example} showing that divergence-preserving branching bisimulation ($\leftrightarroweq_{db}$) does not hold when decomposing the LTS $P$ due to the confluence property of decompositions. On the other hand (literally the other hand of the same figure), branching bisimulation holds when decomposing LTS $P$. The reason it holds under $\leftrightarroweq_b$, but not under $\leftrightarroweq_{db}$ is that the former admits infinite $\tau$ cycles, i.e. divergence, which, as demonstrated here in right side of the figure, avoids the premise of confluence altogether. \cite{Baeten2013Reactive} provides a similar insight into how divergence maintains branching bisimilarity but breaks divergence-preserving branching bisimilarity. 

\begin{thm}
\label{thm:dpbb}
There is no decomposition operation that maintains divergence-preserving branching bisimulation ($\leftrightarroweq_{db}$) for all LTSs.
\end{thm}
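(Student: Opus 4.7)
The plan is to derive this theorem as an immediate consequence of Lemma~\ref{lem:aorb}, which already supplies the concrete LTS $P$ that will serve as a universal counterexample. First I would fix an arbitrary general decomposition operator $G$ in the sense of Definition~\ref{dfn:decomposition_general}, and apply it to $P$ under the partition $\Sigma_1=\{a\}$, $\Sigma_2=\{b\}$ of its alphabet. By the very definition of $G$, the resulting pair $(P_1,P_2)=G(P,\Sigma_1,\Sigma_2)$ has action sets $\Sigma_{P_1}\supseteq\Sigma_1$ and $\Sigma_{P_2}\supseteq\Sigma_2$ satisfying $\Sigma_{P_1}\cap\Sigma_2=\emptyset$ and $\Sigma_{P_2}\cap\Sigma_1=\emptyset$; together with $\Sigma_1\cap\Sigma_2=\emptyset$, these are exactly the hypotheses imposed on $P_1$ and $P_2$ in Lemma~\ref{lem:aorb}.

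Next I would invoke Lemma~\ref{lem:aorb} to conclude that $P \not\leftrightarroweq_{db} P_x$, where $P_x = P_1 \times P_2$. Since $G$ was arbitrary, no general decomposition operator can yield a decomposition of $P$ whose synchronous product is divergence-preserving branching bisimilar to $P$. Because a single witnessing LTS already refutes a universally quantified claim ``for all LTSs'', this establishes the theorem.

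There is essentially no remaining obstacle: all the technical content — the confluence observation of Lemma~\ref{lem:product-confluence} and the concrete failure of $\leftrightarroweq_{db}$ for the small fork LTS $P$ — has already been discharged in the preceding lemmas. The only point worth checking carefully is that the action-set constraints imposed by Definition~\ref{dfn:decomposition_general} line up verbatim with those required by Lemma~\ref{lem:aorb}, which they do. Thus the proof itself will consist of little more than naming $P$ and $G$, citing the two definitions, and applying the lemma.
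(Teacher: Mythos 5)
Your proposal is correct and follows essentially the same route as the paper: both derive the theorem directly from Lemma~\ref{lem:aorb} applied to the fork LTS $P$, the only cosmetic difference being that the paper phrases it as a proof by contradiction while you argue directly (and you additionally spell out the check that the alphabet constraints of Definition~\ref{dfn:decomposition_general} match the hypotheses of the lemma, which the paper leaves implicit).
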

\begin{proof}
We prove this theorem by contradiction. Assume that there is a decomposition operation that maintains $\leftrightarroweq_{db}$ for all LTSs. Then it must do so for any arbitrary LTS $P$. But since \cref{lem:aorb} proves that no LTS maintains $\leftrightarroweq_{db}$ for one such LTS $P$, i.e the one in \cref{fig:dpbb-example}, then there is no decomposition operation that maintains $\leftrightarroweq_{db}$ for all LTSs. 
\end{proof}

\begin{figure}
    \centering
\begin{tikzpicture}[->,shorten >=1pt,auto,node distance=2cm,on grid,semithick, inner sep=1mm, minimum size=0pt,bend angle=20]
\node [scale=1.4] (dpb) {$\not\leftrightarroweq_{db}$};
\node[initial,initial where=above,initial text=,state, minimum size=0pt] (p) [above left= .25cm and 1.5cm of dpb] {$p$};
\node [scale=1.4] (bb) [below left= .25cm and 1.5cm of p] {$\leftrightarroweq_{b}$};
\node[state,minimum size=0pt] (r) [below left= 1.25cm and 0.7cm of p] {$r$};
\node[state,minimum size=0pt] (s) [below right= 1.25cm and 0.7cm of p] {$s$};
\path [every node/.style={font=\footnotesize}, bend angle=37]
(p) edge node [anchor=south east] {$a$} (r)
(p) edge node {$b$} (s)
;

\node[initial,initial where=above,initial text=,state, minimum size=0pt] (p_x_1) [above left= .25cm and 1.3cm of bb]  {$p_x$};
\node[state,minimum size=0pt] (p1_1) [left= 1cm of p_x_1] {$ $};
\node[state,minimum size=0pt] (r_x_1) [left= 1.25cm of p1_1] {$ $};
\node[state,minimum size=0pt] (s_x_1) [below= 1.25cm of p1_1] {$s_x$};
\node[state,minimum size=0pt] (pp_1)  [below= 1.25cm of r_x_1] {$r_x$};
\path [every node/.style={font=\footnotesize}, bend angle=57]
(p_x_1) edge node [swap,pos=0.35] {$\tau^*$} (p1_1)
(p1_1)  edge [bend left=27] node {$\tau$} (r_x_1)
(p1_1)  edge node {$b$} (s_x_1)
(r_x_1) edge node {$a$} (pp_1)
(r_x_1) edge [bend left=27] node {$\tau$} (p1_1)
;

\node[initial,initial where=above,initial text=,state, minimum size=0pt] (p_x_2) [above right= .25cm and 1.3cm of dpb]  {$p_x$};
\node[state,minimum size=0pt] (p1_2) [right= 1cm of p_x_2] {$ $};
\node[state,minimum size=0pt] (r_x_2) [right= 1.25cm of p1_2] {$r_x$};
\node[state,minimum size=0pt] (s_x_2) [below = 1.25cm of p1_2] {$s_x$};
\node[state,minimum size=0pt] (pp_2) [below = 1.25cm of r_x_2] {$p'$};
\path [every node/.style={font=\footnotesize}, bend angle=37]
(p_x_2) edge node  {$\tau^*$} (p1_2)
(p1_2) edge node  {$a$} (r_x_2)
(p1_2) edge node {$b$} (s_x_2)
(r_x_2) edge node {$b$} (pp_2)
;

\end{tikzpicture}
    \caption{Illustration for \cref{lem:aorb}.}
    \label{fig:dpbb-example}
\end{figure}

\section{Interpretation}
\label{sec:conclusion}
One way to understand this fundamental result is that if the subsystems of the decomposition must communicate, then there is no escape from introducing divergence in order to maintain equivalence over any and all decompositions of LTSs. 

With respect to automata learning, this result implies that, unless one values divergency, it is not possible to make any assumption about the distribution of components based on the information that is learned. If one can observe divergencies while learning behaviour, it might be possible to say something about the internal structure of a system though this will be highly non trivial to accomplish. 

And with respect to software, our result says that it is always possible to distribute a piece of software over different components if one allows divergent behaviour.  Otherwise, such a distribution is not possible and based on our proof. One can see that this already applies to very simple behaviours. 

Furthermore, divergence, in an industrial context, is undesired due to the requirement of fairness, i.e., one subsystem seizing unfair control over the total behaviour of the system through infinite looping. This means that if some decomposition is found to maintain fairness, then that is guaranteed not to be the case universally over all contexts and all LTSs. 

\newpage

\bibliographystyle{eptcs}
\bibliography{decompBib}
\end{document}